% pdflatex main
% biber main
% pdflatex main

\documentclass[11pt,a4paper]{article}

\usepackage[DIV=10]{typearea} % large value = less margin (roughly speaking), recommended values: 10pt: 8, 11pt: 10, 12pt: 12

\usepackage[utf8]{inputenc}
\usepackage[T1]{fontenc}

\usepackage{microtype}

\usepackage{amsmath}
\usepackage{amssymb}
\usepackage{amsthm}

\usepackage[tt=false]{libertine}
\usepackage[varqu]{zi4}
\usepackage[libertine]{newtxmath} % must load after ams packages
\usepackage{MnSymbol}

\usepackage{tikz}

\usepackage{xcolor}
\usepackage{xspace}

\usepackage[
	style=alphabetic,
	backref=true,
	doi=false,
	url=false,
	maxcitenames=3,
	mincitenames=3,
	maxbibnames=10,
	minbibnames=10,
%	backend=bibtex8,
	backend=biber,
	sortlocale=en_US
]{biblatex}

% Load hyperref and cleveref as last packages, in this order
\usepackage[ocgcolorlinks]{hyperref} % Option ocgcolorlinks makes links only colored on screen and not on printouts
%\usepackage{cleveref}

% Custom commands
% Multiple "thanks" that refer to the same text
% http://tex.stackexchange.com/a/4171

% Allow page breaks in the middle of multi-line equations (align)
% Argument 1-4, where 1 is least permissive
\allowdisplaybreaks[1]

% Makes epsilon look aesthetically more pleasing
\renewcommand{\epsilon}{\varepsilon}

\protected\def\mathbb#1{\text{\usefont{U}{msb}{m}{n}#1}} %gives us blackboard back -> http://tex.stackexchange.com/questions/214570/try-to-use-ams-blackboard-bold-font-together-with-texgyrepagella

% Bold math in bold text
% http://tex.stackexchange.com/a/124311
\makeatletter
\g@addto@macro\bfseries{\boldmath}
\makeatother

% Using the new macro above, we also have to "undo" boldmath for mdseries. Some macros don't use the clean \mdseries to "unbold" text (e.g., amsmath's plain style uses \upshape); thus, it is necessary to modify other macros as well.
\makeatletter
\g@addto@macro\mdseries{\unboldmath}
\g@addto@macro\normalfont{\unboldmath}
\g@addto@macro\rmfamily{\unboldmath}
\g@addto@macro\upshape{\unboldmath}
\makeatother

% A command for "manual" citations which can be useful for providing citations in the abstract of a paper.
% \citem[Text]{key} will insert the citation [Text] and refer to the specified key in the biblography
\DeclareCiteCommand{\citem}
    {}
    {\mkbibbrackets{\bibhyperref{\usebibmacro{postnote}}}}
    {\multicitedelim}
    {}

% Change "plus" symbol for multi-author citations in alphabetic style
% http://tex.stackexchange.com/a/130031

% Change separator for multiple citations to a comma (instead of semicolon)
\renewcommand*{\multicitedelim}{\addcomma\space}

% Remove field "note" from full citations
\AtEveryCitekey{\clearfield{note}}

% Make title in bibliography a clickable link using doi and url information
% https://tex.stackexchange.com/a/146438
%\newcommand{\doiorurl}{%
%  \iffieldundef{doi}
%    {\iffieldundef{url}
%       {}
%       {\strfield{url}}}
%    {http://dx.doi.org/\strfield{doi}}%
%}
%
%\newcommand{\myhref}[1]{%
% \ifboolexpr{%
%   test {\ifhyperref}
%   and
%   not test {\iftoggle{bbx:url}}
%   and
%   not test {\iftoggle{bbx:doi}}
%  }
%  {\href{\doiorurl}{#1}}
%  {#1}%
%}

\newcommand{\myhref}[1]{%
  \iffieldundef{doi}
    {\iffieldundef{url}
       {#1}
       {\href{\strfield{url}}{#1}}}
    {\href{http://dx.doi.org/\strfield{doi}}{#1}}%
}

\DeclareFieldFormat{title}{\myhref{\mkbibemph{#1}}}
\DeclareFieldFormat
  [article,inbook,incollection,inproceedings,patent,thesis,unpublished]
  {title}{\myhref{\mkbibquote{#1\isdot}}}

% Enable line breaks for long links using ocgcolorlinks
% http://tex.stackexchange.com/a/47309
\makeatletter
\AtBeginDocument{%
    \newlength{\temp@x}%
    \newlength{\temp@y}%
    \newlength{\temp@w}%
    \newlength{\temp@h}%
    \def\my@coords#1#2#3#4{%
      \setlength{\temp@x}{#1}%
      \setlength{\temp@y}{#2}%
      \setlength{\temp@w}{#3}%
      \setlength{\temp@h}{#4}%
      \adjustlengths{}%
      \my@pdfliteral{\strip@pt\temp@x\space\strip@pt\temp@y\space\strip@pt\temp@w\space\strip@pt\temp@h\space re}}%
    \ifpdf
      \typeout{In PDF mode}%
      \def\my@pdfliteral#1{\pdfliteral page{#1}}% I don't know why % this command...
      \def\adjustlengths{}%
    \fi
    \ifxetex
      \def\my@pdfliteral #1{}% isn't equivalent to this one
      \def\adjustlengths{\setlength{\temp@h}{-\temp@h}\addtolength{\temp@y}{1in}\addtolength{\temp@x}{-1in}}%
    \fi%
    \def\Hy@colorlink#1{%
      \begingroup
        \ifHy@ocgcolorlinks
          \def\Hy@ocgcolor{#1}%
          \my@pdfliteral{q}%
          \my@pdfliteral{7 Tr}% Set text mode to clipping-only
        \else
          \HyColor@UseColor#1%
        \fi
    }%
    \def\Hy@endcolorlink{%
      \ifHy@ocgcolorlinks%
        \my@pdfliteral{/OC/OCPrint BDC}%
        \my@coords{0pt}{0pt}{\pdfpagewidth}{\pdfpageheight}%
        \my@pdfliteral{F}% Fill clipping path (the url's text) with
                           % current color
        %
        \my@pdfliteral{EMC/OC/OCView BDC}%
        \begingroup%
          \expandafter\HyColor@UseColor\Hy@ocgcolor%
          \my@coords{0pt}{0pt}{\pdfpagewidth}{\pdfpageheight}%
          \my@pdfliteral{F}% Fill clipping path (the url's text)
                             % with \Hy@ocgcolor
        \endgroup%
        \my@pdfliteral{EMC}%
        \my@pdfliteral{0 Tr}% Reset text to normal mode
        \my@pdfliteral{Q}%
      \fi
      \endgroup
    }%
}
\makeatother

% Bibliography
\addbibresource{references.bib}

% Hyperref options
\colorlet{DarkRed}{red!50!black}
\colorlet{DarkGreen}{green!50!black}
\colorlet{DarkBlue}{blue!50!black}

\hypersetup{
	linkcolor = DarkRed,
	citecolor = DarkGreen,
	urlcolor = DarkBlue,
	bookmarks = true,
	bookmarksnumbered = true,
	linktocpage = true
}

% Theorem environments
\newtheorem{theorem}{Theorem}[section]
\newtheorem{lemma}[theorem]{Lemma}
\newtheorem{corollary}[theorem]{Corollary}
\newtheorem{proposition}[theorem]{Proposition}

% Math commands
\newcommand{\dist}{\operatorname{dist}}

\newcommand{\p}{p}
\newcommand{\q}{q}

% Titlepage information
\title{A Note on Hardness of Diameter Approximation\thanks{Accepted to \emph{Information Processing Letters}. A preliminary version of this paper was presented at the \emph{31st International Symposium on Distributed Computing (DISC'17)} as a \emph{brief announcement}.}}

\author{Karl~Bringmann\thanks{Max Planck Institute for Informatics, Saarland Informatics Campus, Germany}
	\and Sebastian~Krinninger\thanks{University of Salzburg, Department of Computer Sciences, Austria. Work partially done while at Max Planck Institute for Informatics, Saarland Informatics Campus, Germany, and while at University of Vienna, Faculty of Computer Science, Austria.}
}
\date{}

\hypersetup{
	pdftitle = {A Note on Hardness of Diameter Approximation},
	pdfauthor = {Karl Bringmann, Sebastian Krinninger}
}

% Main document
\begin{document}

\maketitle
\begin{abstract}
We revisit the hardness of approximating the diameter of a network.
In the CONGEST model of distributed computing, $ \tilde \Omega (n) $ rounds are necessary to compute the diameter \citem[Frischknecht et al.\ SODA'12]{FrischknechtHW12}, where $ \tilde \Omega (\cdot) $ hides polylogarithmic factors.
Abboud et al.\ \citem[DISC 2016]{AbboudCK16} extended this result to sparse graphs and, at a more fine-grained level, showed that, for any integer $ 1 \leq \ell \leq \operatorname{polylog} (n) $, distinguishing between networks of diameter $ 4 \ell + 2 $ and $ 6 \ell + 1 $ requires $ \tilde \Omega (n) $ rounds.
We slightly tighten this result by showing that even distinguishing between diameter $ 2 \ell + 1 $ and $ 3 \ell + 1 $ requires $ \tilde \Omega (n) $ rounds.
The reduction of Abboud et al.\ is inspired by recent conditional lower bounds in the RAM model, where the orthogonal vectors problem plays a pivotal role.
In our new lower bound, we make the connection to orthogonal vectors explicit, leading to a conceptually more streamlined exposition.

\end{abstract}

\section{Introduction}\label{sec:introduction}

In distributed computing, the diameter of a network is arguably the single most important quantity one wishes to compute.
In the CONGEST model~\cite{Peleg00}, where in each round every vertex can send to each of its neighbors a message of size $ O (\log{n}) $, it is known that $ \tilde \Omega(n) $ rounds are necessary to compute the diameter~\cite{FrischknechtHW12} even in sparse graphs~\cite{AbboudCK16}, where $ n $ is the number of vertices.
With this negative result in mind, it is natural that the focus has shifted towards \emph{approximating} the diameter.
In this note, we revisit hardness of computing a diameter approximation in the CONGEST model from a \emph{fine-grained} perspective.

The current fastest approximation algorithm~\cite{HolzerPRW14}, which is inspired by a corresponding RAM model algorithm~\cite{RodittyW13}, takes $ O (\sqrt{n \log n} + D) $ rounds and computes a $ \tfrac{3}{2} $-approximation of the diameter, i.e., an estimate $ \hat{D} $ such that $ \lfloor \tfrac{2}{3} D \rfloor \leq \hat{D} \leq D $, where $ D $ is the true diameter of the network.
In terms of lower bounds, Abboud, Censor-Hillel, and Khoury~\cite{AbboudCK16} showed that $ \tilde \Omega (n) $ rounds are necessary to compute a $ (\tfrac{3}{2} - \epsilon) $-approximation of the diameter for any constant $ 0 < \epsilon < \tfrac{1}{2} $.
At a more fine-grained level, they show that, for any integer $ 1 \leq \ell \leq \operatorname{polylog} (n) $, at least $ \tilde \Omega (n) $ rounds are necessary to decide whether the network has diameter $ 4 \ell + 2 $ or $ 6 \ell + 1 $, thus ruling out any ``relaxed'' notions of $(\tfrac{3}{2} -\varepsilon)$-approximation that additionally allow small additive error.
We tighten this result by showing that, for any integer $ \ell \geq 1 $, at least $ \tilde \Omega (n) $ rounds are necessary to distinguish between diameter $ 2 \ell + 1 $ and $ 3 \ell + 1 $, and more generally between diameter $ 2 \ell + \q $ and $ 3 \ell + \q $ for any $ \ell, \q \geq 1 $.

The reduction of Abboud et al.~\cite{AbboudCK16} is inspired by recent work on conditional lower bounds in the RAM model, where the \emph{orthogonal vectors problem} plays a pivotal role.
In our new lower bound, we make the connection between diameter approximation and orthogonal vectors explicit:
we consider a communication complexity version of orthogonal vectors that we show to be hard \emph{unconditionally} by a reduction from set disjointness and then devise a reduction from orthogonal vectors to diameter approximation.

Additionally, our approach has implications in the RAM model.
There, the \emph{Strong Exponential Time Hypothesis (SETH)}~\cite{ImpagliazzoPZ01} states that for every $ \delta > 0 $ there is an integer $ k \geq 3 $ such that $k$-SAT admits no algorithm with running time $ O (2^{(1 - \delta) N}) $ and the \emph{Orthogonal Vectors Hypothesis (OVH)} states that there is no algorithm to decide whether a given set of $d$-dimensional vectors of length~$ n $ contains an orthogonal pair in time $ O (n^{2 - \delta} \operatorname{poly} (d)) $ for any constant $ \delta > 0 $.
It is well-known that SETH implies OVH~\cite{Williams05}.
Prior to our work, the situation in the RAM model was as follows.
In their seminal paper~\cite{RodittyW13}, Roditty and Vassilevska Williams showed that, for any constants $ \epsilon > 0 $ and $ \delta > 0 $ there is no algorithm that computes a $ (\tfrac{3}{2} - \epsilon) $-approximation of the diameter and runs in time $ O (m^{2 - \delta}) $, unless the Strong Exponential Time Hypothesis (SETH) fails.
In particular, they show that no algorithm can decide whether a given graph has diameter $ 2 $ or $ 3 $ in time $ O (m^{2 - \delta}) $, unless the Strong Exponential Time Hypothesis (SETH) fails.
The hardness of $ 2 $ vs.\ $ 3 $ is already implied by the weaker Orthogonal Vectors Hypothesis (OVH), which in turn is implied by SETH~\cite{Williams05} and was popularized after the paper of Roditty and Vassilevska Williams appeared.
It has then been shown by Chechik et al.~\cite{ChechikLRSTW14} that, for any integer $ 1 \leq \ell \leq n^{o(1)} $, there is no algorithm that distinguishes between diameter $ 3 (\ell + 1) $ and $ 4 (\ell + 1) $ with running time $ O (m^{2 - \delta}) $ for some constant $ \delta > 0 $, unless SETH fails.
Finally, Cairo, Grossi, and Rizzi~\cite{CairoGR16} showed that, for any integer $ 1 \leq \ell \leq n^{o(1)} $, there is no algorithm that distinguishes between diameter $ 2 \ell $ and $ 3 \ell $ with running time $ O (m^{2 - \delta}) $ for some constant $ \delta > 0 $, unless SETH fails.
Our reduction reconstructs the result of Cairo et al.\ under the weaker hardness assumption OVH, yielding again a more streamlined chain of reductions.

\section{Reduction from Set Disjointness to Orthogonal Vectors}

Set disjointness is a problem in communication complexity between two players, called Alice and Bob, in which Alice is given an $n$-dimensional bit vector~$ x $ and Bob is given an $n$-dimensional bit vector~$ y $ and the goal for Alice and Bob is to find out whether there is some index $ k $ at which both vectors contain a $ 1 $, i.e., such that $ x [k] = y [k] = 1 $ (meaning the sets represented by $ x $ and~$ y $ are not disjoint).
The relevant measure in communication complexity is the number of bits exchanged by Alice and Bob in any protocol that Alice and Bob follow to determine the solution.
A classic result~\cite{KushilevitzN97,Razborov92} states that any such protocol requires Alice and Bob to exchange $ \Omega (n) $ bits to solve set disjointness.
 
In the orthogonal vectors problem, Alice is given a set of bit vectors $ L = \{ l_1, \ldots, l_n \} $ and Bob is given a set of bit vectors $ R = \{ r_1, \ldots, r_n \} $, and the goal for them is to find out if there is a pair of orthogonal vectors $ l_i \in L $ and $ r_j \in R $ (i.e., such that $ l_i [k] = 0 $ or $ r_j [k] = 0 $ in each dimension~$ k $).
We give a reduction from set disjointness to orthogonal vectors.

\begin{theorem}
Any $b$-bit protocol for the orthogonal vectors problem in which Alice and Bob each hold $ n $ vectors of dimension $ d = 2 \lceil \log n \rceil + 3 $, gives a $b$-bit protocol for the set disjointness problem where Alice and Bob each hold an $ n $-dimensional bit vector.
\end{theorem}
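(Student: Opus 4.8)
The plan is to give a purely \emph{local}, communication-free transformation: Alice converts her bit vector $ x \in \{0,1\}^n $ into a set $ L = \{l_1, \dots, l_n\} $ and Bob converts his bit vector $ y \in \{0,1\}^n $ into a set $ R = \{r_1, \dots, r_n\} $, each vector of dimension $ d = 2 \lceil \log n \rceil + 3 $, such that $ L $ and $ R $ contain an orthogonal pair if and only if $ x $ and $ y $ are \emph{not} disjoint. Since Alice computes $ L $ from $ x $ alone and Bob computes $ R $ from $ y $ alone, no bits are exchanged during the transformation; hence running any $ b $-bit protocol for orthogonal vectors on $ (L, R) $ and negating the output (which is free) yields a $ b $-bit protocol for set disjointness on $ (x, y) $, as claimed.

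For the construction I would write $ h = \lceil \log n \rceil $, let $ \mathrm{bin}(i) \in \{0,1\}^h $ be the $ h $-bit binary encoding of the index $ i \in \{1, \dots, n\} $ (which fits since $ n \le 2^h $), and let $ \overline{\mathrm{bin}(i)} $ be its bitwise complement. The $ d $ coordinates split into three blocks. In the first $ h $ coordinates, $ l_i $ holds $ \mathrm{bin}(i) $ and $ r_j $ holds $ \overline{\mathrm{bin}(j)} $; in the next $ h $ coordinates, $ l_i $ holds $ \overline{\mathrm{bin}(i)} $ and $ r_j $ holds $ \mathrm{bin}(j) $. In the remaining three coordinates I place an ``$ x $-tester'' coordinate (entry $ 1 - x[i] $ in $ l_i $, entry $ 1 $ in $ r_j $), a ``$ y $-tester'' coordinate (entry $ 1 $ in $ l_i $, entry $ 1 - y[j] $ in $ r_j $), and one coordinate that is $ 0 $ in every vector, which only serves to bring the dimension to the stated $ 2h + 3 $ and plays no role.

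Then I would verify correctness by evaluating the inner product $ l_i \cdot r_j $ (the number of coordinates where both vectors are $ 1 $) block by block. On the first two blocks the contribution is $ \mathrm{bin}(i) \cdot \overline{\mathrm{bin}(j)} + \overline{\mathrm{bin}(i)} \cdot \mathrm{bin}(j) $, which equals the Hamming distance between $ \mathrm{bin}(i) $ and $ \mathrm{bin}(j) $; it is $ 0 $ exactly when $ i = j $ and at least $ 1 $ otherwise. On the last block the contribution is $ (1 - x[i]) + (1 - y[j]) $, which is $ 0 $ exactly when $ x[i] = y[j] = 1 $. Since $ l_i \cdot r_j $ is the sum of these non-negative block contributions, $ l_i \cdot r_j = 0 $ if and only if $ i = j $ and $ x[i] = y[i] = 1 $. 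Hence an orthogonal pair exists if and only if some index $ k $ satisfies $ x[k] = y[k] = 1 $, i.e., if and only if $ x $ and $ y $ are not disjoint. I would also sanity-check the degenerate cases (e.g. $ x \equiv 0 $ or $ y \equiv 0 $, or an index with $ \mathrm{bin}(i) = 0^h $), where the last block contributes at least $ 1 $ to every pair, so no orthogonal pair arises, matching disjointness.

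The only real point to get right is the index-matching block: one must check that complementing one side makes its contribution vanish precisely for equal indices and never for unequal ones, which is exactly the fact that distinct binary strings have positive Hamming distance. Everything else — the $ x $/$ y $ testers, the dimension bookkeeping, and the observation that the transformation needs no communication so the bit count is preserved exactly — is immediate, so I do not expect a substantial obstacle beyond stating the gadget cleanly.
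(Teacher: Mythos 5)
Your proposal is correct and follows essentially the same approach as the paper: a communication-free local transformation in which the index-matching is enforced by concatenating the binary encoding of~$i$ with its bitwise complement (so that $l_i$ and $r_j$ can only be orthogonal on those coordinates when $i=j$), plus a constant number of extra coordinates that test $x[i]=y[j]=1$. The only cosmetic difference is your value gadget $\bigl((1-x[i]),1,0\bigr)$ vs.\ $\bigl(1,(1-y[j]),0\bigr)$ in place of the paper's $\bigl(x[i],\bar x[i],\bar x[i]\bigr)$ vs.\ $\bigl(\bar y[j],y[j],\bar y[j]\bigr)$, which is a harmless re-packaging yielding the same dimension $d=2\lceil\log n\rceil+3$.
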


\begin{proof}
We show that, without any communication, Alice and Bob can transform a set disjointness instance $ \langle x, y \rangle $ with $ n $-dimensional bit vectors into an orthogonal vectors instance $ \langle L, R \rangle $ such that $ x $ and $ y $ are not disjoint if and only if $ \langle L, R \rangle $ contains an orthogonal pair.
For every integer $ 1 \leq i \leq n $, let $ s_i $ denote the binary representation of $ i $ with $ \lceil \log n \rceil $ bits.
For every bit $ b $, let $ \bar{b} $ be the result of `flipping' bit $ b $, i.e., $ \bar{1} = 0 $, and $ \bar{0} = 1 $.
Similarly, for a bit vector $ b $, let $ \bar{b} $ be the result of flipping each bit of $ b $.
For every $ 1 \leq i \leq n $, let $ l_i $ be the vector obtained from concatenating $ x [i] $, $ \bar{x} [i] $, $ \bar{x} [i] $, $ s_i $, and $ \bar{s}_i $.
For every $ 1 \leq j \leq n $, let $ r_i $ be the vector obtained from concatenating $ \bar{y} [i] $, $ y [i] $, $ \bar{y} [i] $, $ \bar{s}_i $, and $ s_i $.

We now claim that the vectors $ x $ and~$ y $ are not disjoint if and only if $ \langle L, R \rangle $ contains an orthogonal pair.
If the vectors $ x $ and $ y $ are not disjoint, then there is some $ i $ such that $ x [i] = {y [i] = 1} $.
Clearly, $ s_i $ and $ \bar{s}_i $ are orthogonal and, as the vectors $ (x [i], \bar{x} [i], \bar{x} [i]) $ and $ (\bar{y} [i], y [i], \bar{y} [i]) $ are equal to $ (1, 0, 0) $ and $ (0, 1, 0) $, respectively, they are also orthogonal.
It follows that $ l_i $ and $ r_i $ are orthogonal.

Now assume that $ \langle L, R \rangle $ contains an orthogonal pair $ l_i \in L $ and $ r_j \in R $.
We first show that $ i = j $.
Suppose for the sake of contradiction that $ i \neq j $.
Then the binary representations $ s_i $ and~$ s_j $ differ in at least one bit, say $ s_i [k] \neq s_j [k] $.
If $ s_i [k] = 0 $ and $ s_j [k] = 1 $, then $ \bar{s}_i $ and $ s_j $ are not orthogonal and thus $ l_i $ and $ r_j $ are not orthogonal, contradicting the assumption.
If $ s_i [k] = 1 $ and $ s_j [k] = 0 $, then $ s_i $ and $ \bar{s}_j $ are not orthogonal and thus $ l_i $ and $ r_j $ are not orthogonal, contradicting the assumption.
It follows that $ i = j $ and thus the vectors $ (x [i], \bar{x} [i], \bar{x} [i]) $ and $ (\bar{y} [i], y [i], \bar{y} [i]) $ are orthogonal.
Orthogonality of $ x [i] $ and $ \bar{y} [i] $ rules out $ x [i] = 1 $ and $ y [i] = 0 $, orthogonality of $ \bar{x} [i] $ and $ y [i] $ rules out $ x [i] = 0 $ and $ y [i] = 1 $, and orthogonality of $ \bar{x} [i] $ and $ \bar{y} [i] $ rules out $ x [i] = 0 $ and $ y [i] = 0 $.
It follows that $ x [i] = y [i] = 1 $, making $ x $ and $ y $ not disjoint.
\end{proof}

The hardness of set disjointness now directly transfers to orthogonal vectors.

\begin{corollary}\label{cor:communication complexity of OV}
Any protocol solving the orthogonal vectors problem with $ n $ vectors of dimension $ d = 2 \lceil \log n \rceil + 3 $, requires Alice and Bob to exchange $ \Omega (n) $ bits. 
\end{corollary}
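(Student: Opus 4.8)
The plan is to combine the reduction established in the preceding Theorem with the classical communication lower bound for set disjointness. Recall that any protocol solving set disjointness on $ n $-dimensional bit vectors requires $ \Omega (n) $ bits of communication~\cite{KushilevitzN97,Razborov92}, and that the Theorem shows every $ b $-bit protocol for orthogonal vectors with $ n $ vectors of dimension $ d = 2 \lceil \log n \rceil + 3 $ on each side yields a $ b $-bit protocol for set disjointness on $ n $-dimensional bit vectors. Chaining these two facts immediately gives the claimed bound.

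Concretely, I would argue by contraposition. Suppose, for the sake of contradiction, that there is a protocol $ \Pi $ solving the orthogonal vectors problem on $ n $ vectors of dimension $ d = 2 \lceil \log n \rceil + 3 $ using $ b = o (n) $ bits of communication. Given a set disjointness instance $ \langle x, y \rangle $ with $ n $-dimensional bit vectors, Alice and Bob each locally compute the vectors $ l_i $ and $ r_i $ as in the proof of the Theorem (no communication is needed, since each player's part of the instance depends only on their own input and on the publicly known indices $ s_i, \bar{s}_i $), run $ \Pi $ on $ \langle L, R \rangle $, and output its answer. By the correctness guarantee of the Theorem, this decides set disjointness correctly while exchanging only $ b = o (n) $ bits, contradicting the $ \Omega (n) $ lower bound. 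Hence no such $ \Pi $ exists, and every protocol for orthogonal vectors in this parameter regime must use $ \Omega (n) $ bits.

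There is essentially no obstacle here — the corollary is a direct consequence of the Theorem. The only point worth verifying is that the parameters match: the reduction in the Theorem produces exactly $ n $ vectors per player, each of dimension $ 3 + 2 \lceil \log n \rceil $ (three coordinates for the $ (x[i], \bar{x}[i], \bar{x}[i]) $ / $ (\bar{y}[i], y[i], \bar{y}[i]) $ gadget, and $ \lceil \log n \rceil $ coordinates each for $ s_i $ and $ \bar{s}_i $), which is precisely the dimension $ d $ in the statement. Thus the hardness transfers without any loss.
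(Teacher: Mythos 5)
Your proof is correct and takes exactly the route the paper intends: the paper states the corollary as a direct consequence of the preceding theorem combined with the classical $\Omega(n)$ lower bound for set disjointness, and you have simply written out that chaining explicitly (including the sanity check that the dimension $d = 2\lceil \log n \rceil + 3$ matches).
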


\section{Reduction from Orthogonal Vectors to Diameter}

We now establish hardness of distinguishing between networks of diameter $ 2 \ell + \q $ and $ 3 \ell + \q $ for any $ \ell \geq 1 $ and $ \q \geq 1 $ in the CONGEST model and for any $ \ell \geq 1 $ and $ \q \geq 0 $ in the RAM model, respectively.
To unify the cases of odd and even $ \ell $, we introduce an additional parameter $ \p \in \{ 0, 1 \} $ and change the task to distinguishing between networks of diameter $ 4 \ell' - 2 \p + \q $ and $ 6 \ell' - 3 \p + \q $ for integers $ \ell' \geq 1 $, $ \q \geq 0 $, and $ \p \in \{ 0, 1 \} $.
This covers the original question: if $ \ell $ is even, then set $ \ell' := \ell/2 $ and $ \p := 0 $ and if $ \ell $ is odd, then set $ \ell' := \lceil \ell/2 \rceil $ and $ \p := 1 $.

\subsection{Construction and Implications}

Given an orthogonal vectors instance $ \langle L := \{ l_1, \ldots, l_n \}, R := \{ r_1, \ldots, r_n \} \rangle $ of $d$-dimensional vectors and parameters $ \ell \geq 1 $, $ \q \geq 0 $, and $ \p \in \{ 0, 1 \} $, we define an unweighted undirected graph $ G := G_{L, R, \ell, \p, \q} $ as follows.
The graph~$ G $ contains the following \emph{exterior} vertices: $ u_1^L, \ldots, u_n^L $, $u_1^R, \ldots, u_n^R $, $ v_1^L, \ldots, v_n^L $, $ v_1^R, \ldots, v_n^R $, $ w_1^L, \ldots, w_d^L $, $ w_1^R, \ldots, w_d^R $, $ x^L $, $ x^R $, $ y^L $, and $ y^R $.
These exterior vertices are connected by paths as follows, where each path introduces a separate set of \emph{interior} vertices:
\begin{itemize}
\item For every $ 1 \leq i \leq n $, add paths $ \pi (u_i^L, v_i^L) $ and $ \pi (u_i^R, v_i^R) $, each of length~$ \ell - \p $.
\item For every $ 1 \leq i \leq n $, add paths $ \pi (v_i^L, x^L) $ and $ \pi (v_i^R, x^R) $, each of length~$ \ell $.
\item For every $ 1 \leq i \leq n $ and every $ 1 \leq k \leq d $ such that $ l_i [k] = 1 $ add a path $ \pi (v_i^L, w_k^L) $ of length~$ \ell $.
\item For every $ 1 \leq i \leq n $ and every $ 1 \leq k \leq d $ such that $ r_i [k] = 1 $ add a path $ \pi (v_i^R, w_k^R) $ of length~$ \ell $.
\item For every $ 1 \leq k \leq d $, add paths $ \pi (y^L, w_k^L) $ and $ \pi (y^R, w_k^R) $, each of length~$ \ell $.
\item For every $ 1 \leq k \leq d $, add a path $ \pi (w_k^L, w_k^R) $ of length~$ \q $. That is, if $ \q = 0 $ then identify $ w_i^L $ and~$ w_i^R $.
\item Add a path $ \pi (x^L, y^L) $ and a path $ \pi (x^R, y^R) $, each of length~$ \ell - \p $.
\item Add a path $ \pi (y^L, y^R) $ of length~$ \p + \q $. That is, if $ \p + \q = 0 $, then identify $ y^L $ and~$ y^R $.
\end{itemize}
If $ \ell = 1 $ and $ \p = 1 $, then we identify $ x^L $ and $ y^L $, $ x^R $ and $ y^R $, $ u_i^L $ and $ v_i^L $ (for each $ 1 \leq i \leq n $), as well as $ u_i^R $ and $ v_i^R $ (for each $ 1 \leq i \leq n $), respectively.
The graph $ G $ is visualized in Figure~\ref{fig:reduction}.
Observe that $ G $ has $ O (n d \ell + d \q) $ vertices and $ O (n d \ell + d \q) $ edges.
We show that our construction has the following formal guarantees.

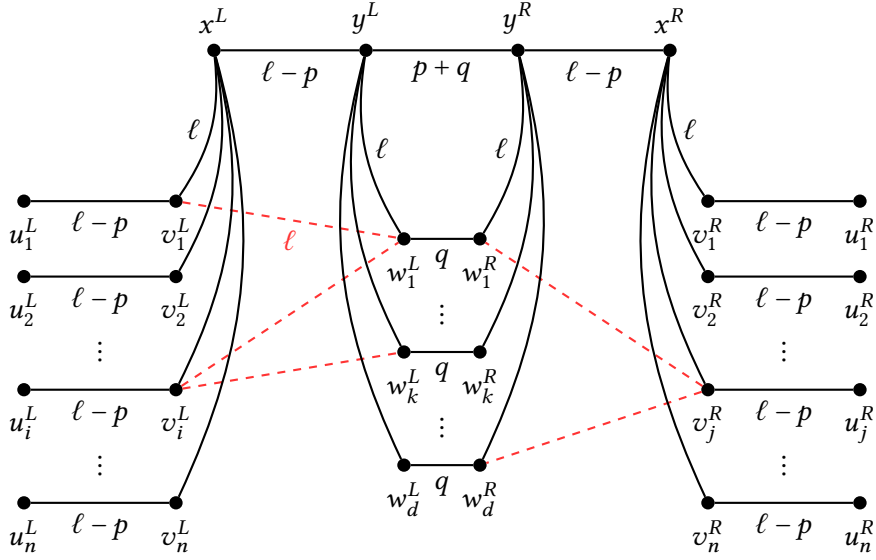
\begin{figure}
\centering

\begin{tikzpicture}
\tikzstyle{vertex}=[circle,fill=black,minimum size=5pt,inner sep=0pt,outer sep=0pt]
\tikzstyle{edge} = [thick]
\tikzstyle{gadget-edge} = [red!80,thick,dashed]

% Left nodes
\node[vertex, label=below:{$u_1^L$}] (u1L) at (-5.5, 0) {};
\node[vertex, label=below:{$u_2^L$}] (u2L) at (-5.5, -1) {};
\node[vertex, label=below:{$u_i^L$}] (uiL) at (-5.5, -2.5) {};
\node[vertex, label=below:{$u_n^L$}] (unL) at (-5.5, -4) {};

\node[vertex, label=below:{$v_1^L$}] (v1L) at (-3.5, 0) {};
\node[vertex, label=below:{$v_2^L$}] (v2L) at (-3.5, -1) {};
\node[vertex, label=below:{$v_i^L$}] (viL) at (-3.5, -2.5) {};
\node[vertex, label=below:{$v_n^L$}] (vnL) at (-3.5, -4) {};

\node at (-4.5, -2) {\vdots};
\node at (-4.5, -3.5) {\vdots};

\node[vertex, label=below:{$w_1^L$}] (w1L) at (-0.5, -0.5) {};
\node[vertex, label=below:{$w_k^L$}] (wkL) at (-0.5, -2) {};
\node[vertex, label=below:{$w_d^L$}] (wdL) at (-0.5, -3.5) {};

\node[vertex, label={$x^L$}] (xL) at (-3, 2) {};
\node[vertex, label={$y^L$}] (yL) at (-1, 2) {};

% Right nodes
\node[vertex, label=below:{$u_1^R$}] (u1R) at (5.5, 0) {};
\node[vertex, label=below:{$u_2^R$}] (u2R) at (5.5, -1) {};
\node[vertex, label=below:{$u_j^R$}] (ujR) at (5.5, -2.5) {};
\node[vertex, label=below:{$u_n^R$}] (unR) at (5.5, -4) {};

\node[vertex, label=below:{$v_1^R$}] (v1R) at (3.5, 0) {};
\node[vertex, label=below:{$v_2^R$}] (v2R) at (3.5, -1) {};
\node[vertex, label=below:{$v_j^R$}] (vjR) at (3.5, -2.5) {};
\node[vertex, label=below:{$v_n^R$}] (vnR) at (3.5, -4) {};

\node at (4.5, -2) {\vdots};
\node at (4.5, -3.5) {\vdots};

\node[vertex, label=below:{$w_1^R$}] (w1R) at (0.5, -0.5) {};
\node[vertex, label=below:{$w_k^R$}] (wkR) at (0.5, -2) {};
\node[vertex, label=below:{$w_d^R$}] (wdR) at (0.5, -3.5) {};

\node[vertex, label={$x^R$}] (xR) at (3, 2) {};
\node[vertex, label={$y^R$}] (yR) at (1, 2) {};

% Center nodes
\node at (0, -1.5) {\vdots};
\node at (0, -3) {\vdots};

% Center edges
\draw[edge] (yL) to node [below] {$\p + \q$} (yR);

\draw[edge] (w1L) to node [below] {$\q$} (w1R);
\draw[edge] (wkL) to node [below] {$\q$} (wkR);
\draw[edge] (wdL) to node [below] {$\q$} (wdR);

\draw[gadget-edge] (v1L) to node [below] {$\ell$} (w1L);
\draw[gadget-edge] (viL) to (w1L);
\draw[gadget-edge] (viL) to (wkL);
\draw[gadget-edge] (vjR) to (w1R);
\draw[gadget-edge] (vjR) to (wdR);

% Left edges
\draw[edge] (u1L) to node [below] {$\ell - \p$} (v1L);
\draw[edge] (u2L) to node [below] {$\ell - \p$} (v2L);
\draw[edge] (uiL) to node [below] {$\ell - \p$} (viL);
\draw[edge] (unL) to node [below] {$\ell - \p$} (vnL);

\draw[edge, bend left=20] (xL) to node [left] {$\ell$} (v1L);
\draw[edge, bend left=20] (xL) to (v2L);
\draw[edge, bend left=20] (xL) to (viL);
\draw[edge, bend left=20] (xL) to (vnL);

\draw[edge, bend right=20] (yL) to node [right] {$\ell$} (w1L);
\draw[edge, bend right=20] (yL) to (wkL);
\draw[edge, bend right=20] (yL) to (wdL);

\draw[edge] (xL) to node [below] {$\ell - \p$} (yL);

% Right edges
\draw[edge] (u1R) to node [below] {$\ell - \p$} (v1R);
\draw[edge] (u2R) to node [below] {$\ell - \p$} (v2R);
\draw[edge] (ujR) to node [below] {$\ell - \p$} (vjR);
\draw[edge] (unR) to node [below] {$\ell - \p$} (vnR);

\draw[edge, bend right=20] (xR) to node [right] {$\ell$} (v1R);
\draw[edge, bend right=20] (xR) to (v2R);
\draw[edge, bend right=20] (xR) to (vjR);
\draw[edge, bend right=20] (xR) to (vnR);

\draw[edge, bend left=20] (yR) to node [left] {$\ell$} (w1R);
\draw[edge, bend left=20] (yR) to (wkR);
\draw[edge, bend left=20] (yR) to (wdR);

\draw[edge] (xR) to node [below] {$\ell - \p$} (yR);
\end{tikzpicture}

\caption{Visualization of the graph $ G := G_{L, R, \ell, \p, \q} $ used in our reduction from orthogonal vectors to diameter distinction. The red, dashed edges encode the orthogonal vectors instance: For every $ 1 \leq i \leq n $ and every $ 1 \leq k \leq d $, the path $ \pi (v_i^L, w_k^L) $ is contained in $ G $ if and only if $ l_i [k] = 1 $. Similarly, for every $ 1 \leq j \leq n $ and every $ 1 \leq k \leq d $, the path $ \pi (v_i^R, w_k^R) $ is contained in $ G $ if and only if $ r_j [k] = 1 $.
}\label{fig:reduction}
\end{figure}

\begin{theorem}\label{thm:main technical diameter distinction result}
Let $ \langle L, R \rangle $ be an orthogonal vectors instance of two sets of $ d $-dimensional vectors of size $ n $ each and let $ \ell \geq 1 $, $ \p \in \{ 0, 1 \} $, and $ \q \geq 0 $ be integer parameters.
Then the unweighted, undirected graph $ G := G_{L, R, \ell, \p, \q} $ has $ O (n d \ell + d \q) $ vertices and edges and its diameter~$ D $ has the following property: if $ \langle L, R \rangle $ contains an orthogonal pair, then $ D = 6 \ell - 3 \p + \q $, and if $ \langle L, R \rangle $ contains no orthogonal pair, then $ D = 4 \ell - 2 \p + \q $.
\end{theorem}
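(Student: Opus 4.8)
The plan is to show that the diameter of $G$ is always realised between two pendant vertices $u_i^L$ and $u_j^R$, and to determine $d_G(u_i^L,u_j^R)$ exactly: it equals $4\ell-2\p+\q$ when $l_i$ and $r_j$ are \emph{not} orthogonal and $6\ell-3\p+\q$ when they are. Granting this, if $\langle L,R\rangle$ contains an orthogonal pair then $D\ge 6\ell-3\p+\q$, and otherwise every $u_i^L$--$u_j^R$ pair is at distance exactly $4\ell-2\p+\q$, so all that remains is to check that no pair of vertices exceeds the relevant target. The bound on the number of vertices and edges is immediate by summing the lengths of the $O(nd)$ paths of length $O(\ell)$ and the $O(d)$ paths of length $\q$. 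Two structural observations drive everything. First, $u_i^L$ is attached to the rest of $G$ only through $\pi(u_i^L,v_i^L)$, so $v_i^L$ is a cut vertex and $d_G(u_i^L,t)=(\ell-\p)+d_G(v_i^L,t)$ for every vertex $t$ off that pendant path; symmetrically for $u_j^R$, whence $d_G(u_i^L,u_j^R)=2(\ell-\p)+d_G(v_i^L,v_j^R)$. Second, the $d+1$ paths $\pi(w_k^L,w_k^R)$ (length $\q$) and $\pi(y^L,y^R)$ (length $\p+\q$) are the only connections between the ``$L$-part'' and the ``$R$-part'' of $G$; since traversing any of them costs at least $\q\ge0$ and yields no progress toward the far part, recrossing the interface never helps, so for $s$ in the $L$-part and $t$ in the $R$-part
\[
 d_G(s,t)=\min\Bigl(d^L(s,y^L)+(\p+\q)+d^R(y^R,t),\ \min_{1\le k\le d}\bigl[d^L(s,w_k^L)+\q+d^R(w_k^R,t)\bigr]\Bigr),
\]
where $d^L,d^R$ denote distances inside the two parts (with the obvious reading when $\q=0$ or $\p+\q=0$ and the corresponding paths collapse).

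The core of the argument is then a short computation inside one part. On the $L$-side one verifies $d^L(v_i^L,x^L)=\ell$, $d^L(v_i^L,y^L)=2\ell-\p$, $d^L(v_i^L,v_{i'}^L)=2\ell$, and the crucial dichotomy
\[
 d^L(v_i^L,w_k^L)=\begin{cases}\ell,& \text{if }l_i[k]=1,\\ 3\ell-\p,& \text{if }l_i[k]=0,\end{cases}
\]
the second case realised by $v_i^L\to x^L\to y^L\to w_k^L$; the $R$-side statements are mirror images. Substituting the symmetric facts $d^R(w_k^R,v_j^R)=\ell$ if $r_j[k]=1$ and $3\ell-\p$ if $r_j[k]=0$, together with $d^R(y^R,v_j^R)=2\ell-\p$, into the displayed formula gives
\[
 d_G(v_i^L,v_j^R)=\min\Bigl((2\ell-\p)+(\p+\q)+(2\ell-\p),\ \min_{1\le k\le d}\bigl[d^L(v_i^L,w_k^L)+\q+d^R(w_k^R,v_j^R)\bigr]\Bigr).
\]
If some coordinate $k$ has $l_i[k]=r_j[k]=1$, that $k$-th term equals $\ell+\q+\ell=2\ell+\q$, which is the minimum, so $d_G(v_i^L,v_j^R)=2\ell+\q$. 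If $l_i$ and $r_j$ are orthogonal, then for every $k$ at least one of $l_i[k],r_j[k]$ is $0$, so every crossing term is at least $\ell+\q+(3\ell-\p)=4\ell-\p+\q$, matching the $y^Ly^R$ term, so $d_G(v_i^L,v_j^R)=4\ell-\p+\q$. By the first structural observation this yields $d_G(u_i^L,u_j^R)=4\ell-2\p+\q$ and $6\ell-3\p+\q$ in the two cases, as claimed.

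Finally, one checks that every remaining pair of vertices respects the targets. For exterior pairs this is more of the same bookkeeping: the largest within-one-side distance is $d^L(u_i^L,u_{i'}^L)=4\ell-2\p$, and the other cross distances --- such as $d_G(u_i^L,x^R)=4\ell-2\p+\q$ and $d_G(u_i^L,v_j^R)\le 5\ell-2\p+\q$, where the value $5\ell-2\p+\q$ occurs only for orthogonal $l_i,r_j$ --- obey both $6\ell-3\p+\q$ (in general) and $4\ell-2\p+\q$ (when there is no orthogonal pair), using only $\p\le\ell$ and $\q\ge0$. Interior vertices do not enlarge the diameter: every construction path is a shortest path between its endpoints, so an interior vertex $z$ of $\pi(a,b)$ at distance $t$ from $a$ has $d_G(z,w)\le\min\bigl(t+d_G(a,w),\,(L-t)+d_G(b,w)\bigr)$ for all $w$; feeding in the (already bounded) exterior distances over the few path types --- and using that the large distances all ``point outward'' through the cut vertices --- shows no interior-involving pair exceeds the targets. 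The degenerate case $\ell=\p=1$, where $\pi(u_i^L,v_i^L)$, $\pi(x^L,y^L)$ and their mirrors have length $0$, is absorbed because the vanishing $(\ell-\p)$ terms drop out; all-zero vectors and the case $n=1$ only make the no-orthogonal-pair case easier. I expect the genuinely delicate point to be the \emph{lower bounds} in the distance claims --- above all the separator claim that recrossing the $L$--$R$ interface is never beneficial (which needs care when $\q$ is small) and the optimality of the exhibited routes; once the cut-vertex and separator structure is in place, what remains is a finite, if tedious, verification.
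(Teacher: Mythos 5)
Your plan rests on the same two structural observations as the paper's proof: that $v_i^L$ is a cut vertex severing the pendant path to $u_i^L$, and that the $d+1$ connections $\pi(w_k^L,w_k^R)$ and $\pi(y^L,y^R)$ separate the left side from the right. The lower bound you envisage --- casing on which separator path the (simple) shortest $u_i^L$--$u_j^R$ path contains entirely --- is precisely the argument in the paper's two propositions, and the intra-side distance facts you invoke ($\ell$ vs.\ $3\ell-\p$ to $w_k^L$ depending on $l_i[k]$, $2\ell-\p$ to $y^L$, $2\ell$ to $v_{i'}^L$) are the content of Lemma~\ref{lem:distance lower bounds}. Your reorganisation via an exact cross-distance formula $d_G(v_i^L,v_j^R)=\min(\dotsc)$ is genuinely cleaner in spirit: it yields upper and lower bound in one shot and makes the orthogonal/non-orthogonal dichotomy very transparent. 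I also checked that the numbers come out right in both cases, including the degenerate $\ell=\p=1$.

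However, as you flag yourself, the places where you write ``one checks'' and ``recrossing the interface never helps'' are exactly where the real work lives, and the plan as stated does not carry it out. Concretely: (i) the replacement of $d_G$ by the within-side distances $d^L,d^R$ inside your separator formula is not a free observation --- a detour through $V^R$ and back is not ruled out simply because each crossing has nonnegative cost (especially when $\q=0$ and $w_k^L=w_k^R$); the paper finesses this by \emph{not} passing to within-side distances and instead proving Lemma~\ref{lem:distance lower bounds} by enumerating all simple paths of length $\le 3\ell-\p$ out of $u_i^L$, a case list which automatically includes paths that cross and return. (ii) The assertions that the diameter is always realised by a $u_i^L$--$u_j^R$ pair, and that interior vertices of construction paths contribute nothing worse, are each a lemma in disguise; in the paper they are precisely Lemmas~\ref{lem:distance to v}--\ref{lem:distance upper bounds}. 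So: same key geometry and the same separator logic, packaged somewhat more elegantly, but this is a sketch with the hard verifications deferred --- and the paper's appendix is essentially the list of what remains to be checked.
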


Before we give a proof of this statement, we motivate it by discussing its immediate consequences in the CONGEST model and the RAM model.
For the CONGEST model, observe that $ G $ has a small cut of size $ d + 1 $ between its left hand side and its right hand side.
A standard simulation argument, where communication between Alice and Bob is limited to messages sent along the small cut, yields our main result.

\begin{corollary}
In the CONGEST model, any algorithm distinguishing between graphs of diameter~$ 2 \ell + \q $ and graphs of diameter~$ 3 \ell + \q $ when $ \ell \geq 1 $ and $ \q \geq 1 $ requires $ \Omega (n / ((\ell + \q) \log^3{n})) $ rounds.
\end{corollary}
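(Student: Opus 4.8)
The plan is to obtain the corollary from Theorem~\ref{thm:main technical diameter distinction result} and Corollary~\ref{cor:communication complexity of OV} by the standard simulation technique for CONGEST lower bounds. First I would line up the parameters: given $\ell\geq1$ and $\q\geq1$ as in the corollary, set $\ell':=\lceil\ell/2\rceil$ and $\p:=\ell\bmod2$, so that $4\ell'-2\p+\q=2\ell+\q$ and $6\ell'-3\p+\q=3\ell+\q$, exactly as observed at the start of Section~3; from now on I work with $G_{L,R,\ell',\p,\q}$ from Theorem~\ref{thm:main technical diameter distinction result}, writing $\ell$ for $\ell'$ to lighten notation.

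Next I would fix the cut that makes the reduction two-party. Start from an orthogonal vectors instance $\langle L,R\rangle$ of $m$ vectors of dimension $d=2\lceil\log m\rceil+3$ and build $G:=G_{L,R,\ell,\p,\q}$. Partition $V(G)$ into a left part $V^L$ --- the exterior vertices $u_i^L,v_i^L,w_k^L,x^L,y^L$ together with \emph{all} interior vertices of paths among them, including the red paths $\pi(v_i^L,w_k^L)$ --- and a symmetric right part $V^R$, splitting each of the $d$ paths $\pi(w_k^L,w_k^R)$ and the single path $\pi(y^L,y^R)$ (all of length at least $1$ since $\q\geq1$) so that exactly one of their edges crosses. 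This gives a cut of precisely $d+1$ edges that, moreover, does not depend on the hidden vectors; and since every red path lies wholly inside one part, Alice (given $L$) can build $G[V^L]$ and Bob (given $R$) can build $G[V^R]$ with no communication.

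Then I would run the simulation. Suppose a CONGEST algorithm distinguishes diameter $2\ell+\q$ from $3\ell+\q$ in $T$ rounds. Using a common identifier assignment, public coins (if the algorithm is randomized), and one initial exchange of $|V^L|$ and $|V^R|$ so both players know a bound on the vertex count $N:=|V(G)|$, Alice simulates the vertices of $V^L$ and Bob those of $V^R$; in each round the only information either player needs from the other is the $O((d+1)\log N)$ bits sent along the $d+1$ cut edges, hence $O(T(d+1)\log N)$ bits overall, after which Theorem~\ref{thm:main technical diameter distinction result} lets them read off whether $\langle L,R\rangle$ has an orthogonal pair. By Corollary~\ref{cor:communication complexity of OV} this forces $T(d+1)\log N=\Omega(m)$. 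Since $d+1=\Theta(\log m)=O(\log N)$ (as $m\leq N$), this yields $T=\Omega(m/\log^2 N)$; and since $N=O(md\ell+d\q)=O((m\ell+\q)\log N)$ by Theorem~\ref{thm:main technical diameter distinction result}, so that $m\ell+\q=\Omega(N/\log N)$, a short case distinction --- whether the $m\ell$ term or the $\q$ term carries the $\Omega(N/\log N)$ bound, the latter case making the claimed bound $O(1)$ and hence trivial --- turns this into $T=\Omega(N/((\ell+\q)\log^3 N))$. Renaming $N$ to $n$ and reverting $\ell$ to the corollary's parameter (a constant-factor change) finishes the proof.

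The steps that are pure bookkeeping are the final arithmetic and the vertex partition; the one place that needs real care --- though by now it is routine --- is checking that the simulation is faithful, i.e., that every message crossing the cut in a given round is determined solely by the data held by the player simulating its endpoint. This is exactly why it matters that the private red edges are confined to one side and that the cut edges, and therefore the schedule of simulated cross-communication, are the same in both cases of the reduction.
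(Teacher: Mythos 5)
Your proposal is correct and follows essentially the same route as the paper: the same choice of $\ell'$ and $p$ to reduce to the Theorem~\ref{thm:main technical diameter distinction result} parametrization, the same $(d+1)$-edge cut between a left side built from $L$ and a right side built from $R$ (using $q\ge1$ so the cut separates Alice's and Bob's private data), the same two-party simulation exchanging $O((d+1)\log N)$ bits per round, and the same invocation of Corollary~\ref{cor:communication complexity of OV} followed by bookkeeping to re-express the $\Omega(m/\log^2 N)$ bound in terms of the graph's own vertex count. Your extra remarks on public coins, identifier assignment, and the case split in the final arithmetic are careful elaborations of steps the paper treats more tersely, not a different argument.
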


\begin{proof}
Let $ \langle L, R \rangle $ be an orthogonal vectors instance with $ n $ vectors of dimension $ d = 2 \lceil \log n \rceil + 3 $ and let $ \mathcal{A} $ be an algorithm distinguishing between graphs of diameter~$ 2 \ell + 1 $ and graphs of diameter~$ 3 \ell + 1 $.
If $ \ell $ is even, then set $ \ell' := \ell/2 $ and $ \p := 0 $ and if $ \ell $ is odd, then set $ \ell' := \lceil \ell/2 \rceil $ and $ \p := 1 $.
Then by Theorem~\ref{thm:main technical diameter distinction result} the graph $ G := G_{L, R, \ell', \p, \q} $ has diameter $ 3 \ell + \q $ if $ \langle L, R \rangle $ contains an orthogonal pair and $ 2 \ell + \q $ otherwise.
Observe that $ G $ has $ n' = O (n (\ell+q) \log{n}) $ edges and since $ \q \geq 1 $ it can be partitioned into two node sets $ A $ and $ B $ such that
\begin{itemize}
\item $ G [A] $, the subgraph of $ G $ induced by $ A $, is fully determined by $ L $, $ \ell' $, $ \p $, and $ \q $.
\item $ G [B] $, the subgraph of $ G $ induced by $ B $, is fully determined by $ R $, $ \ell' $, $ \p $, and $ \q $.
\item The number of edges between $ A $ and $ B $ in $ G $ is $ d + 1 = O (\log{n}) $.
\end{itemize}
Thus, Alice and Bob can simulate running $ \mathcal{A} $ on the graph $ G $ as follows:
Alice constructs the graph $ G [A] $ and simulates the states of all vertices in $ A $ as well as the messages sent between them and Bob constructs the graph $ G [B] $ and simulates the states of all vertices in $ B $ as well as the messages sent between them.
Every time a message is sent from a node in $ A $ to a node in $ B $, Alice communicates the $ O (\log n) $-size message to Bob and every time a message is sent from a node in $ B $ to a node in $ A $, Bob communicates the $ O (\log n) $-size message to Alice.
Since $ \q \geq 1 $, the subgraph $ G [A] $ simulated by Alice is separated by $ O (\log{n}) $ edges from the subgraph $ G [B] $ simulated by Bob.
Thus, in each simulated round of $ \mathcal{A} $ at most $ O (\log^2 n) $ bits can be sent from Alice to Bob and vice versa.
As Alice and Bob need to exchange $ \Omega (n) $ bits to determine the result to the orthogonal vectors problem by Corollary~\ref{cor:communication complexity of OV}, the algorithm $ \mathcal{A} $ requires $ \Omega (n / \log^2{n}) = \Omega (n' / ((\ell + \q) \log^3{n'})) $ rounds.
\end{proof}

In the RAM model, the Orthogonal Vectors Hypothesis (OVH) states that there is no algorithm that decides whether a given orthogonal vectors instance contains an orthogonal pair in time $ O (n^{2 - \delta} \operatorname{poly} (d)) $ for some constant $ \delta > 0 $.\footnote{In Section~\ref{sec:introduction}, we have mentioned a variant of the orthogonal vectors problem with a single input set. By a straightforward reduction, this variant has the same asymptotic time complexity as the variant with two input sets defined above.}
Under this hardness assumption, our reduction has the following straightforward implication.

\begin{corollary}
In the RAM model, under OVH, there is no algorithm distinguishing between graphs of diameter~$ 2 \ell + \q $ and graphs of diameter~$ 3 \ell + \q $, where $ \ell \geq 1 $ and $ \q \geq 0 $, in time $ O (m^{2 - \delta} / (\ell + \q)^{2 - \delta}) $ for any constant $ \delta > 0 $.
\end{corollary}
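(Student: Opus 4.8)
The plan is a standard fine-grained reduction from orthogonal vectors: assume toward a contradiction that for some constant $\delta > 0$ there is an algorithm $\mathcal{A}$ that, given a graph on $m$ edges, decides whether its diameter is $2\ell + \q$ or $3\ell + \q$ in time $O(m^{2-\delta}/(\ell+\q)^{2-\delta})$, and use it to solve orthogonal vectors fast enough to refute OVH. Since OVH only charges a $\operatorname{poly}(d)$ overhead, we may work with an arbitrary dimension $d$ and with the two-set formulation of orthogonal vectors (equivalent to the single-set one by the reduction mentioned in the footnote), so no set-disjointness detour is needed here — unlike in the CONGEST corollary, the reduction is straight from orthogonal vectors.

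Given an orthogonal vectors instance $\langle L, R \rangle$ consisting of two sets of $n$ vectors in $\{0,1\}^d$, I would first rewrite the target diameters in the normalized form of Theorem~\ref{thm:main technical diameter distinction result}, exactly as in the CONGEST corollary: if $\ell$ is even set $\ell' := \ell/2$ and $\p := 0$, and if $\ell$ is odd set $\ell' := \lceil \ell/2 \rceil$ and $\p := 1$. A one-line calculation shows that in both cases $6\ell' - 3\p + \q = 3\ell + \q$ and $4\ell' - 2\p + \q = 2\ell + \q$, and that the constraints $\ell' \geq 1$, $\p \in \{0,1\}$, $\q \geq 0$ of Theorem~\ref{thm:main technical diameter distinction result} are met, so we may apply it to $G := G_{L,R,\ell',\p,\q}$. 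That theorem then tells us that $G$ has diameter $3\ell + \q$ if $\langle L, R\rangle$ has an orthogonal pair and diameter $2\ell + \q$ otherwise, so a single invocation of $\mathcal{A}$ on $G$ decides the orthogonal vectors instance.

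It remains to track the running time. Theorem~\ref{thm:main technical diameter distinction result} bounds $G$ by $m = O(nd\ell' + d\q) = O(nd(\ell+\q))$ edges, and $G$ can be written down in time linear in its size. Substituting $m = O(nd(\ell+\q))$ into the assumed running time of $\mathcal{A}$ yields
\[
O\!\left(\frac{m^{2-\delta}}{(\ell+\q)^{2-\delta}}\right) \;=\; O\!\left(\frac{\bigl(nd(\ell+\q)\bigr)^{2-\delta}}{(\ell+\q)^{2-\delta}}\right) \;=\; O\!\left(n^{2-\delta} d^{2-\delta}\right) \;=\; O\!\left(n^{2-\delta}\operatorname{poly}(d)\right),
\]
and in the regime in which the statement is meant to be read (e.g.\ $\ell + \q = n^{o(1)}$, matching the precursor results of Cairo et al.\ and Chechik et al.) the $O(nd(\ell+\q))$ construction cost is dominated by this term. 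Hence orthogonal vectors would be solvable in time $O(n^{2-\delta}\operatorname{poly}(d))$, contradicting OVH.

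There is no genuinely hard step, since all the substance lives in Theorem~\ref{thm:main technical diameter distinction result}, which supplies the diameter gap; the only points that need care are two bookkeeping items. First, the parity substitution $(\ell,\q)\mapsto(\ell',\p,\q)$ must reproduce precisely $2\ell+\q$ and $3\ell+\q$ for both parities of $\ell$, including the degenerate case $\ell=1$, $\p=1$ that is already absorbed into Theorem~\ref{thm:main technical diameter distinction result}. Second, in the running-time accounting the $(\ell+\q)^{2-\delta}$ in the denominator of the hypothesized bound is exactly what neutralizes the $(\ell+\q)^{2-\delta}$ factor that enters through $m = \Theta(nd(\ell+\q))$, so that the dependence on $\ell+\q$ never contaminates the $n^{2-\delta}$ term. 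Note also that, in contrast to the CONGEST setting, there is no small-cut simulation step, so the restriction $\q \geq 1$ disappears and $\q = 0$ is allowed.
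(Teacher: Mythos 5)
Your proof is correct and follows essentially the same line as the paper's: apply the parity substitution $(\ell,\q)\mapsto(\ell',\p,\q)$, instantiate Theorem~\ref{thm:main technical diameter distinction result} to get $G_{L,R,\ell',\p,\q}$, and observe that running the hypothetical algorithm $\mathcal{A}$ on the $O(nd(\ell+\q))$-edge graph takes $O(n^{2-\delta}d^{2-\delta}) = O(n^{2-\delta}\operatorname{poly}(d))$ time, contradicting OVH. Your remark about the construction cost needing $\ell+\q$ not too large is a fair caveat that the paper leaves implicit.
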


\begin{proof}
Let $ \langle L, R \rangle $ be an orthogonal vectors instance with $ n $ vectors of dimension $ d $ and let $ \mathcal{A} $ be an algorithm distinguishing between graphs of diameter~$ 2 \ell + \q $ and graphs of diameter~$ 3 \ell + \q $ running in time $ O (m^{2 - \delta} / (\ell + \q)^{2 - \delta}) $.
If $ \ell $ is even, then set $ \ell' := \ell/2 $ and $ \p := 0 $ and if $ \ell $ is odd, then set $ \ell' := \lceil \ell/2 \rceil $ and $ \p := 1 $.
Then by Theorem~\ref{thm:main technical diameter distinction result} the graph $ G := G_{L, R, \ell', \p, \q} $ has diameter $ 3 \ell + \q $ if $ \langle L, R \rangle $ contains an orthogonal pair and $ 2 \ell + \q  $ otherwise.
Observe that $ G $ has $ m = O (n d \ell + d \q) $ edges and thus $ \mathcal{A} $ will take time $ O (n^{2 - \delta} d^{2 - \delta}) $ on $ G $.
This yields an algorithm solving any orthogonal vectors instance in time $ O (n^{2 - \delta} \operatorname{poly} (d)) $, contradicting OVH.
\end{proof}

\subsection{Proof of Theorem~\ref{thm:main technical diameter distinction result}}\label{sec:diameter distinction}

Before we give the proof of Theorem~\ref{thm:main technical diameter distinction result}, we introduce the following useful terminology:
For every $ 1 \leq i \leq n $, $ P_i^L $ is defined as the set of all vertices that lie on one of the following paths: $ \pi (u_i^L, v_i^L) $, $ \pi (v_i^L, y^L) $ (\emph{excluding $ y^L $}), or $ \pi (v_i^L, w_k^L) $ (\emph{excluding $ w_k^L $}) for some $ 1 \leq k \leq d $ such that $ l_i [k] = 1 $.
Similarly, for every $ 1 \leq i \leq n $, $ P_i^R $ is defined as the set of all vertices that lie on one of the following paths: $ \pi (u_i^R, v_i^R) $, $ \pi (v_i^R, y^R) $ (\emph{excluding $ y^R $}), or $ \pi (v_i^R, w_k^R) $ (\emph{excluding $ w_k^R $}) for some $ 1 \leq k \leq d $ such that $ r_i [k] = 1 $.
We set $ V^L := \bigcup_{1 \leq i \leq n} P_i^R $ (left vertices),  $ V^R := \bigcup_{1 \leq i \leq n} P_i^R $ (right vertices), and $ V^M := V \setminus (V^L \cup V^R) $.
Note that $ V^M $ consists on all vertices that lie on $ \pi (y^L, y^R) $, $ \pi (x^L, y^L) $, $ \pi (x^R, y^R) $, $ \pi (y^L, w_k^L) $ for some $ 1 \leq k \leq d $, $ \pi (y^R, w_k^R) $ for some $ 1 \leq k \leq d $, or $ \pi (w_k^L, w_k^R) $ for some $ 1 \leq k \leq d $.

We now state some universal upper and lower bounds on distances in the graph~$ G $ that hold regardless of whether the orthogonal vectors instance contains an orthogonal pair.
Their correctness can readily be verified and we also give rigorous proofs in the appendix.

\begin{lemma}\label{lem:distance to v}
For every orthogonal vectors instance, $ \dist_G (s, v_i^L) \leq \ell - \p $ for every $ 1 \leq i \leq n $ and every $ s \in P_i^L $ and $ \dist_G (v_j^R, t) \leq \ell - \p $ for every $ 1 \leq j \leq n $ and every $ t \in P_j^R $.
\end{lemma}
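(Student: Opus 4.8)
The plan is to exhibit, for each relevant vertex $s$, an explicit short path to $v_i^L$ (and symmetrically from $v_j^R$), simply by following the path on which $s$ was placed back toward $v_i^L$. Recall that $s \in P_i^L$ means $s$ lies on one of three kinds of paths emanating from $v_i^L$: the path $\pi(u_i^L, v_i^L)$ of length $\ell - \p$, the path $\pi(v_i^L, y^L)$ of length $\ell$ (but $s \neq y^L$), or a path $\pi(v_i^L, w_k^L)$ of length $\ell$ with $l_i[k] = 1$ (but $s \neq w_k^L$). I would handle these three cases in turn.

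\emph{Case 1: $s$ on $\pi(u_i^L, v_i^L)$.} This path has length exactly $\ell - \p$, so walking along it from $s$ to $v_i^L$ gives $\dist_G(s, v_i^L) \leq \ell - \p$ immediately.

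\emph{Cases 2 and 3: $s$ on $\pi(v_i^L, y^L)$ with $s \neq y^L$, or on $\pi(v_i^L, w_k^L)$ with $s \neq w_k^L$.} In either case $s$ is at distance at most $\ell - 1$ from $v_i^L$ along that path, since the path has total length $\ell$ and $s$ is not the far endpoint; hence $\dist_G(s, v_i^L) \leq \ell - 1 \leq \ell - \p$ because $\p \in \{0,1\}$. I should also remark that this covers $s = v_i^L$ itself (distance $0$), and that in the degenerate case $\ell = 1, \p = 1$ the identifications described in the construction only collapse vertices, which can only decrease distances, so the bound still holds. The argument for $\dist_G(v_j^R, t) \leq \ell - \p$ is entirely symmetric, using the paths $\pi(u_j^R, v_j^R)$, $\pi(v_j^R, y^R)$, and $\pi(v_j^R, w_k^R)$.

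There is essentially no obstacle here: the only subtlety is being careful that the exclusions ``excluding $y^L$'' and ``excluding $w_k^L$'' are exactly what make the bound $\ell - 1$ rather than $\ell$ available when $\p = 1$, and confirming that the vertex identifications in the $\ell = \p = 1$ boundary case do not break monotonicity of distances. I would state the bound for a single representative path and note that the general case follows by taking the minimum over the (at most $d+1$) paths through $s$.
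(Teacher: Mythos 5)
Your proof is correct and follows essentially the same three-way case split as the paper's own argument: identify which of the three path types $s$ lies on, and observe that along that path $s$ is at distance at most $\ell-\p$ (length $\ell-\p$ in the $\pi(u_i^L,v_i^L)$ case; at most $\ell-1\leq\ell-\p$ in the other two because the far endpoint is excluded and $\p\leq 1$). One small note: you write the second path as $\pi(v_i^L,y^L)$, inheriting a typo from the paper's definition of $P_i^L$; the path actually used in the construction (and in the paper's proof and the remark following the lemma) is $\pi(v_i^L,x^L)$, excluding $x^L$, which is what you correctly assign length $\ell$.
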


Note that in Lemma~\ref{lem:distance to v} it is crucial that we have defined each path $ P_i^L $ to exclude the node~$ x^L $ and that $ \p \leq 1 $.

\begin{lemma}\label{lem:distance upper middle vertices}
For every orthogonal vectors instance and every pair of vertices $ s, t \in V^M $, $ \dist_G (s, t) \leq 4 \ell - 2 \p + \q $ and more specifically $ \dist_G (x^L, v) \leq 2 \ell - \p + \q $ and $ \dist_G (v, x^R) \leq 2 \ell - \p + \q $ for every vertex $ v \in V^M $.
\end{lemma}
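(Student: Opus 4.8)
The plan is to base both parts of the lemma on one structural observation about $V^M$: apart from the two pendant paths $\pi(x^L,y^L)$ and $\pi(x^R,y^R)$, every vertex of $V^M$ lies on a short $y^L$--$y^R$ path. Indeed, for each $k$ the concatenation $\Gamma_k := \pi(y^L,w_k^L)\cup\pi(w_k^L,w_k^R)\cup\pi(w_k^R,y^R)$ is a path from $y^L$ to $y^R$ of length exactly $2\ell+\q$, and, inspecting the six path types whose union is $V^M$, every vertex of $V^M$ not lying on $\pi(x^L,y^L)\cup\pi(x^R,y^R)$ lies on one of the $\Gamma_k$ or on $\pi(y^L,y^R)$ (which has length $\p+\q\le 2\ell+\q$). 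Walking along whichever of these paths contains a given $v$ therefore yields $\dist_G(v,y^L)+\dist_G(v,y^R)\le 2\ell+\q$ for every such $v$.

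First I would establish the two ``more specifically'' bounds, $\dist_G(x^L,v)\le 2\ell-\p+\q$ and $\dist_G(v,x^R)\le 2\ell-\p+\q$ for all $v\in V^M$; by the left--right symmetry of the construction it suffices to prove the first. I would go through the six path types making up $V^M$. For $v$ on $\pi(x^L,y^L)$ the bound is immediate, and for the remaining types one routes from $x^L$ through $y^L$, using $\dist_G(x^L,y^L)\le\ell-\p$, $\dist_G(y^L,w_k^L)\le\ell$, $\dist_G(y^L,y^R)\le\p+\q$, and $\dist_G(w_k^L,w_k^R)\le\q$, after which a one-line computation gives the claim in each case. The only case that is not completely routine is $v$ on $\pi(y^R,w_k^R)$: the route through $y^R$ costs up to $(\ell-\p)+(\p+\q)+\ell=2\ell+\q$, which is off by $\p$, so one must also consider the route $x^L\to y^L\to w_k^L\to w_k^R\to v$ through the cross-path; the two route lengths sum to $4\ell-\p+2\q$, hence the shorter of them is at most $\lfloor(4\ell-\p+2\q)/2\rfloor=2\ell-\p+\q$. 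This ``take the shorter of two competing routes'' step is the one place where a little care is needed, and I expect it to be the main (minor) obstacle.

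With these bounds available, the global estimate follows by a short case distinction. If $s$ lies on $\pi(x^L,y^L)$ then $\dist_G(s,x^L)\le\ell-\p$, so $\dist_G(s,t)\le\dist_G(s,x^L)+\dist_G(x^L,t)\le(\ell-\p)+(2\ell-\p+\q)=3\ell-2\p+\q\le 4\ell-2\p+\q$, and symmetrically whenever $s$ or $t$ lies on $\pi(x^L,y^L)\cup\pi(x^R,y^R)$. Otherwise both $s$ and $t$ satisfy $\dist_G(\cdot,y^L)+\dist_G(\cdot,y^R)\le 2\ell+\q$, and since $\dist_G(s,t)\le\min\{\dist_G(s,y^L)+\dist_G(t,y^L),\ \dist_G(s,y^R)+\dist_G(t,y^R)\}$ is at most the average of its two arguments, we get $\dist_G(s,t)\le\tfrac12\big((\dist_G(s,y^L)+\dist_G(s,y^R))+(\dist_G(t,y^L)+\dist_G(t,y^R))\big)\le 2\ell+\q\le 4\ell-2\p+\q$. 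The inequalities $3\ell-2\p+\q\le 4\ell-2\p+\q$ and $2\ell+\q\le 4\ell-2\p+\q$ hold because $\ell\ge 1\ge\p$. Since every bound above was derived purely from path lengths, the degenerate cases where vertices get identified ($\q=0$, $\p+\q=0$, $\ell=\p=1$) require no separate treatment.
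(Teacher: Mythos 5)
Your proof is correct, though organized differently from the paper's. Both arguments rest on averaging two competing routes, but applied at different points. The paper proves the two claims independently: for the general bound it observes that each vertex of $V^M$ lies on one of at most $d+1$ simple $x^L$--$x^R$ paths in $G[V^M]$ of length at most $4\ell-2\p+\q$, concatenates the paths through $s$ and through $t$ into a closed walk at $x^L$, and takes the shorter arc between $s$ and $t$; for the specific bound it covers $V^M$ by three families of $x^L$-rooted paths of length at most $2\ell-\p+\q$, dropping the final vertex $w_k^R$ of the family running through $y^R$ since that endpoint is already covered by another family. You instead prove the specific bounds first by examining the six path types of $V^M$ and resolving the one awkward case, $v\in\pi(y^R,w_k^R)$, by averaging the route via $y^R$ against the route via $w_k^L,w_k^R$ and applying the floor --- a legitimate replacement for the paper's ``drop-the-last-vertex'' trick, since the two route lengths sum to $4\ell-\p+2\q$ --- and then derive the general bound from the specific ones together with the observation that non-pendant middle vertices satisfy $\dist_G(\cdot,y^L)+\dist_G(\cdot,y^R)\le 2\ell+\q$. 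The paper's closed-walk argument for the general bound is arguably the cleaner of the two, while your version makes explicit how the general bound rests on the specific one and in passing gives the slightly sharper interior estimate $2\ell+\q$.
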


\begin{lemma}\label{lem:distance upper bounds}
For every orthogonal vectors instance and every pair of vertices $ s $ and $ t $, if $ s \in V^L \cup V^M $ and $ t \in V^L \cup V^M $ or $ s \in V^R \cup V^M $ and $ t \in V^R \cup V^M $, then $ \dist_G (s, t) \leq 4 \ell - 2 \p + \q $.
\end{lemma}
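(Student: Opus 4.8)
The plan is to use the left–right symmetry of $ G $ to halve the case analysis and then bound $ \dist_G(s,t) $ by the triangle inequality along a route that passes through the hub vertex $ x^L $.

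First I would note that interchanging the superscripts $ L $ and $ R $ everywhere — on the exterior vertices and, correspondingly, on the interior vertices of all paths — is a graph automorphism of $ G $ that maps $ V^L $ onto $ V^R $ and fixes $ V^M $ setwise, and under which the statements of Lemmas~\ref{lem:distance to v} and~\ref{lem:distance upper middle vertices} are invariant. Hence it suffices to treat the case $ s,t \in V^L \cup V^M $. Since $ V^M = V \setminus (V^L \cup V^R) $, the sets $ V^L $ and $ V^M $ are disjoint, so exactly one of three sub-cases occurs: (A) $ s,t \in V^M $; (B) exactly one of $ s,t $ lies in $ V^L $, and using symmetry of $ \dist_G $ I may assume $ s \in V^L $ and $ t \in V^M $; (C) $ s,t \in V^L $.

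Case (A) is exactly the first inequality of Lemma~\ref{lem:distance upper middle vertices}. For case (B), pick $ i $ with $ s \in P_i^L $. Then $ \dist_G(s, v_i^L) \le \ell - \p $ by Lemma~\ref{lem:distance to v}; $ \dist_G(v_i^L, x^L) \le \ell $ because $ G $ contains the path $ \pi(v_i^L, x^L) $ of length $ \ell $; and $ \dist_G(x^L, t) \le 2\ell - \p + \q $ by the second part of Lemma~\ref{lem:distance upper middle vertices} (as $ t \in V^M $). Summing, $ \dist_G(s,t) \le (\ell - \p) + \ell + (2\ell - \p + \q) = 4\ell - 2\p + \q $. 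For case (C), pick $ i,j $ with $ s \in P_i^L $ and $ t \in P_j^L $; the same ingredients give $ \dist_G(s,t) \le \dist_G(s,v_i^L) + \dist_G(v_i^L, x^L) + \dist_G(x^L, v_j^L) + \dist_G(v_j^L, t) \le (\ell - \p) + \ell + \ell + (\ell - \p) = 4\ell - 2\p \le 4\ell - 2\p + \q $, the last step using $ \q \ge 0 $.

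The argument contains no real obstacle: after exploiting the $ L \leftrightarrow R $ symmetry, it is entirely the triangle inequality along paths through $ x^L $. The two points requiring mild care are that a vertex of $ V^L $ reaches $ V^M $ only via its endpoint $ v_i^L $ and then $ x^L $ — precisely why Lemma~\ref{lem:distance to v} was arranged so that the paths defining $ P_i^L $ exclude $ x^L $ — and that in case (C) the bound $ 4\ell - 2\p $ still needs to be compared against the claimed $ 4\ell - 2\p + \q $, which is fine since $ \q \ge 0 $.
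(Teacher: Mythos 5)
Your proposal is correct and matches the paper's own proof almost exactly: the same case split into $V^M\times V^M$, $V^L\times V^M$, and $V^L\times V^L$, the same routing through $v_i^L$ and $x^L$ with Lemmas~\ref{lem:distance to v} and~\ref{lem:distance upper middle vertices}, and the same appeal to $L\leftrightarrow R$ symmetry for the remaining cases (which you merely make slightly more explicit by naming the automorphism).
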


\begin{lemma}\label{lem:distance lower bounds}
For every orthogonal vectors instance, the following holds in $ G $:
\begin{itemize}
\item $ \dist_G (u_i^L, y^L) = 3 \ell - 2 \p $ and $ \dist_G (u_i^R, y^R) = 3 \ell - 2 \p $ for every $ 1 \leq i \leq n $,
\item $ \dist_G (u_i^L, v_{i'}^L) = 3 \ell - \p $ and $ \dist_G (u_i^R, v_{i'}^R) = 3 \ell - \p $ for every $ 1 \leq i \leq n $ and every $ 1 \leq i' \leq n $ such that $ i' \neq i $,
\item $ \dist_G (u_i^L, w_k^L) \geq 2 \ell - \p $ and $ \dist_G (u_i^R, w_k^R) \geq 2 \ell - \p $ for every $ 1 \leq i \leq n $ and every $ 1 \leq k \leq n $.
\end{itemize}
\end{lemma}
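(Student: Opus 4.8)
The plan is to collapse $G$ to a small \emph{weighted} multigraph and then obtain all three quantities by elementary bookkeeping. First I would pass to the \emph{exterior skeleton} $H$: the weighted multigraph whose vertices are the exterior vertices of $G$ and which has, for every construction path $\pi(a,b)$, one edge $\{a,b\}$ of weight equal to the length of $\pi(a,b)$ (performing the same identifications as in $G$, so that e.g.\ the connector edges $\{w_k^L,w_k^R\}$ vanish when $\q=0$). Since each construction path carries its own pairwise disjoint set of interior vertices, all of which have degree $2$ in $G$, any shortest path of $G$ between two exterior vertices decomposes, at the exterior vertices it visits, into whole construction paths, while conversely a walk of $H$ lifts to a walk of the same length in $G$; hence $\dist_G(a,b)=\dist_H(a,b)$ for all exterior $a,b$. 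After the identifications every edge of $H$ has weight at least $1$, so shortest paths of $H$ may be taken simple, and by the left--right symmetry of the construction (swapping each superscript $L$ with $R$ together with the two vector sets) it suffices to prove the ``$L$'' versions of the three claims in $H$.

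Next I would peel off $u_i^L$. In $H$ it has the single incident edge $\{u_i^L,v_i^L\}$ of weight $\ell-\p$ (when $\ell=\p=1$ this weight is $0$ and $u_i^L=v_i^L$), so a simple shortest path out of $u_i^L$ uses that edge first and $\dist_H(u_i^L,z)=(\ell-\p)+\dist_H(v_i^L,z)$ for every $z\neq u_i^L$. The three statements thus reduce to $\dist_H(v_i^L,y^L)=2\ell-\p$, $\dist_H(v_i^L,v_{i'}^L)=2\ell$ for $i'\neq i$, and $\dist_H(v_i^L,w_k^L)\ge\ell$. The key observation is that every edge incident to $v_i^L$ other than the pendant edge $\{u_i^L,v_i^L\}$ has weight exactly $\ell$ (namely $\{v_i^L,x^L\}$ and the edges $\{v_i^L,w_{k'}^L\}$ with $l_i[k']=1$), and a simple shortest path cannot leave $v_i^L$ along the pendant edge because $u_i^L$ has degree $1$; the same holds at each $v_{i'}^L$. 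Since $v_i^L$ is adjacent to none of the three target vertices --- apart from the harmless case $l_i[k]=1$, where $\dist_H(v_i^L,w_k^L)=\ell$ exactly, and the degenerate case $\ell=\p=1$, where $x^L=y^L$ so that $\dist_H(v_i^L,y^L)=\ell=2\ell-\p$ at once --- comparing the first and last edges of a simple shortest path gives $\dist_H(v_i^L,w_k^L)\ge\ell$ directly and, since such a path must also enter $v_{i'}^L$ through a weight-$\ell$ edge and hence has at least two edges, $\dist_H(v_i^L,v_{i'}^L)\ge 2\ell$. The matching upper bounds follow from the paths $v_i^L\to x^L\to v_{i'}^L$ and $v_i^L\to x^L\to y^L$, the latter also giving $\dist_H(v_i^L,y^L)\le 2\ell-\p$.

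For the lower bound $\dist_H(v_i^L,y^L)\ge 2\ell-\p$ I would case on the last edge of a simple shortest path $P$ into $y^L$. The edges at $y^L$ are $\{x^L,y^L\}$ of weight $\ell-\p$, the edges $\{w_k^L,y^L\}$ of weight $\ell$, and, when $\p+\q\ge 1$, the connector $\{y^L,y^R\}$ of weight $\p+\q$. If $P$ ends with one of the first two, then its first edge has weight $\ell$, its last edge has weight at least $\ell-\p$, and $P$ has at least two edges, so $P$ has length at least $2\ell-\p$. Otherwise $P$ ends with $\{y^L,y^R\}$, and the initial segment of $P$ is a path from $v_i^L$ to $y^R$ avoiding $y^L$; it must therefore cross to the right side through some connector $\{w_{k^\ast}^L,w_{k^\ast}^R\}$, and taking the last such crossing it consists of a path from $v_i^L$ to $w_{k^\ast}^L$ (length at least $\ell$), the crossing edge (weight $\q$), and a right-side path into $y^R$ whose first edge has weight $\ell$ (length at least $\ell$). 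Thus $P$ has length at least $\ell+\q+\ell+(\p+\q)=2\ell+\p+2\q\ge 2\ell-\p$, so $\dist_H(v_i^L,y^L)=2\ell-\p$. Combining these values with the peeling identity yields $\dist_G(u_i^L,y^L)=3\ell-2\p$, $\dist_G(u_i^L,v_{i'}^L)=3\ell-\p$ for $i'\neq i$, and $\dist_G(u_i^L,w_k^L)\ge 2\ell-\p$, and the $R$-superscripted statements follow by symmetry.

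I expect the main obstacle to be making the skeleton reduction $\dist_G=\dist_H$ fully rigorous, that is, carefully ruling out shortcuts through the interiors of construction paths, together with the connector corner case in the lower bound for $\dist_H(v_i^L,y^L)$. The degenerate identifications ($\ell=\p=1$, $\q=0$, $\p+\q=0$) only shrink $H$ and are consistent with all the reasoning above, so they need no separate treatment beyond the parenthetical remarks already made.
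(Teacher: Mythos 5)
Your proof is correct, and it takes a genuinely different route from the paper. The paper's proof is an exhaustive enumeration: it lists all thirteen types of simple paths of length at most $3\ell-\p$ that start at $u_i^L$ and end at an exterior vertex, and reads off the three distances from that table. You instead pass to the weighted exterior skeleton $H$ (justified by the degree-$2$ observation on interior vertices), peel off the degree-$1$ pendant $u_i^L$, and then exploit the single local fact that every non-pendant edge at any $v_{i'}^L$ has weight exactly $\ell$; the only case that requires an actual path-chasing argument (rather than first/last-edge bookkeeping) is the lower bound $\dist_H(v_i^L,y^L)\ge 2\ell-\p$, which you settle by a correct case split on the last edge and a correct handling of the connector detour $2\ell+\p+2\q\ge 2\ell-\p$. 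What your approach buys is modularity and brevity: the weighted-skeleton reduction and the pendant-peeling identity do most of the work once and for all, and the degenerate identifications ($\ell=\p=1$, $\q=0$, $\p+\q=0$) become one-line remarks rather than extra cases. What the paper's enumeration buys is that it is entirely self-contained and mechanical to verify, and as a by-product it exhibits explicitly which short paths exist (which is convenient when the same enumeration is implicitly reused to justify adjacent lemmas). Both proofs establish the lemma.
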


We finally give the proof of Theorem~\ref{thm:main technical diameter distinction result}.
We split up the two cases (containing an orthogonal pair or not) into two pieces, whose proofs follow a similar pattern.

\begin{proposition}
If the orthogonal vectors instance $ \langle L, R \rangle $ contains no orthogonal pair, then $ D = 4 \ell - 2 \p + \q $.
\end{proposition}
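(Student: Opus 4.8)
The plan is to show two things: first, that $\dist_G(s,t) \le 4\ell - 2\p + \q$ for \emph{every} pair of vertices $s, t$ (so $D \le 4\ell - 2\p + \q$); and second, that the lower bound $D \ge 4\ell - 2\p + \q$ already follows from the universal lower bounds in Lemma~\ref{lem:distance lower bounds} (e.g.\ $\dist_G(u_i^L, v_{i'}^L) = 3\ell - \p$ combined with reaching across to the right side, or more directly from a pair realizing $4\ell - 2\p + \q$). Since Lemma~\ref{lem:distance upper bounds} already handles all pairs where both endpoints avoid the ``opposite'' side --- i.e.\ $s,t \in V^L \cup V^M$ or $s,t \in V^R \cup V^M$ --- the only case left for the upper bound is a genuinely ``crossing'' pair: $s \in V^L$ (say $s \in P_i^L$) and $t \in V^R$ (say $t \in P_j^R$). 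This crossing case is where the no-orthogonal-pair hypothesis gets used, and it is the heart of the argument.

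First I would handle the crossing case. Fix $s \in P_i^L$ and $t \in P_j^R$. By the no-orthogonal-pair assumption, $l_i$ and $r_j$ are \emph{not} orthogonal, so there is a coordinate $k$ with $l_i[k] = r_j[k] = 1$. Then $G$ contains the path $\pi(v_i^L, w_k^L)$ of length $\ell$ and the path $\pi(v_j^R, w_k^R)$ of length $\ell$, joined by $\pi(w_k^L, w_k^R)$ of length $\q$. Using Lemma~\ref{lem:distance to v}, $\dist_G(s, v_i^L) \le \ell - \p$ and $\dist_G(v_j^R, t) \le \ell - \p$. Concatenating gives
\[
\dist_G(s,t) \;\le\; (\ell - \p) + \ell + \q + \ell + (\ell - \p) \;=\; 4\ell - 2\p + \q,
\]
as desired. (One should double-check the degenerate case $\ell = 1$, $\p = 1$, where $u_i^L = v_i^L$ etc.\ are identified --- the bound still holds since the first and last segments have length $0$.)

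For the lower bound $D \ge 4\ell - 2\p + \q$, I would exhibit a concrete pair at that distance using Lemma~\ref{lem:distance lower bounds}. The natural candidate is $\dist_G(u_i^L, u_j^R)$ for $i \ne j$: a shortest path from $u_i^L$ must leave $P_i^L$, and any route to the right side passes through the middle region; combining $\dist_G(u_i^L, w_k^L) \ge 2\ell - \p$ (third item of Lemma~\ref{lem:distance lower bounds}) with the symmetric bound on the right and the length-$\q$ connectors should give $\dist_G(u_i^L, u_j^R) \ge (2\ell - \p) + \q + (2\ell - \p) = 4\ell - 2\p + \q$; alternatively $\dist_G(u_i^L, y^R)$ via $\dist_G(u_i^L, y^L) = 3\ell - 2\p$ plus the $\p + \q$ connector plus reaching back, or $\dist_G(u_i^L, v_{i'}^L) = 3\ell - \p$ already beats this when it does. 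Since $n \ge 2$ such an $i \ne j$ exists (and for $n = 1$ a separate small check with $u_1^L$ versus $u_1^R$ or $y^R$ suffices). The main obstacle is bookkeeping the case analysis for the lower bound cleanly --- making sure the chosen pair really is at distance $\ge 4\ell - 2\p + \q$ rather than something shorter through an unexpected shortcut, and that the argument survives the degenerate identifications when $\ell = 1$, $\p = 1$ or $\p + \q = 0$.
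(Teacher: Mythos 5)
Your proposal is correct and takes essentially the same approach as the paper: for the upper bound, it is exactly the paper's argument (Lemma~\ref{lem:distance upper bounds} reduces to the crossing case $s \in P_i^L$, $t \in P_j^R$, then the no-orthogonal-pair hypothesis yields a coordinate $k$ with $l_i[k] = r_j[k] = 1$ and the triangle inequality through $v_i^L, w_k^L, w_k^R, v_j^R$ gives $4\ell - 2\p + \q$); for the lower bound, the key idea is the same cut argument combined with Lemma~\ref{lem:distance lower bounds}. One simplification you could adopt from the paper: take $\dist_G(u_1^L, u_1^R)$ with $i = j = 1$ rather than insisting on $i \ne j$ --- the cut argument (every $u^L$-to-$u^R$ path fully traverses some $\pi(y^L, y^R)$ or $\pi(w_k^L, w_k^R)$, yielding $\ge (2\ell - \p) + \q + (2\ell - \p)$ or $\ge (3\ell - 2\p) + (\p + \q) + (3\ell - 2\p)$) never uses $i \ne j$, so you avoid the extra $n = 1$ case; also note that $\dist_G(u_i^L, v_{i'}^L) = 3\ell - \p$ alone never reaches $4\ell - 2\p + \q$ since $\ell \ge \p$ and $\q \ge 0$, so that alternative does not help.
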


\begin{proof}
We first show that $ D \leq 4 \ell - 2 \p + \q $, i.e., $ \dist_G (s, t) \leq 4 \ell - 2 \p + \q $ for every pair of vertices $ s, t \in V $.
Note that by Lemma~\ref{lem:distance upper bounds} we only have to show that $ \dist_G (s, t) \leq 4 \ell - 2 \p + \q $ whenever $ s \in P_i^L $ for some $ 1 \leq i \leq n $ and $ t \in P_j^R $ for some $ 1 \leq j \leq n $.
By Lemma~\ref{lem:distance to v} we have $ \dist_G (s, v_i^L) \leq \ell - \p $ and $ \dist_G (v_j^R, t) \leq \ell - \p $ for such $ s \in P_i^L $ and $ t \in P_j^R $.
Since the orthogonal vectors instance contains no orthogonal pair there is a $ 1 \leq k \leq d $ such that both $ l_i [k] = r_j [k] = 1 $.
Thus, our graph $ G $ contains both paths $ \pi (v_i^L, w_k^L) $ and $ \pi (w_k^R, v_j^R) $, each of length $ \ell $.
By the triangle inequality we therefore have
\begin{align*}
\dist_G (s, t) \leq \underbrace{\dist_G (s, v_i^L)}_{\leq \ell - \p \text{ (Lemma~\ref{lem:distance to v})}} &+ \underbrace{\dist_G (v_i^L, w_k^L)}_{\leq \ell} + \underbrace{\dist_G (w_k^L, w_k^R)}_{\leq \q} \\
 &+ \underbrace{\dist_G (w_k^R, v_j^R)}_{\leq \ell} + \underbrace{\dist_G (v_j^R, t)}_{\leq \ell - \p \text{ (Lemma~\ref{lem:distance to v})}} \leq 4 \ell - 2 \p + \q \, .
\end{align*}

It remains to show that $ D \geq 4 \ell - 2 \p + \q $.
We will argue that $ \dist_G (u_1^L, u_1^R) \geq 4 \ell - 2 \p + \q $.
Since the paths $ \pi (y^L, y^R), \pi (w_1^L, w_1^R), \ldots, \pi (w_d^L, w_d^R) $ separate the left part of the graph from the right part of the graph, every path from $ u_i^L $ to $ u_j^R $ must contain at least one of these paths entirely.
If the shortest path from $ u_i^L $ to $ u_j^R $ contains the path $ \pi (y^L, y^R) $ entirely, then, since $ \dist_G (u_i^L, y^L) = 3 \ell - 2 \p $ and $ \dist_G (u_j^R, y^R) = 3 \ell - 2 \p $ by Lemma~\ref{lem:distance lower bounds},
\begin{equation*}
\dist_G (u_1^L, u_1^R) = \underbrace{\dist_G (u_1^L, y^L)}_{\geq 3 \ell - 2 \p \text{ (Lemma~\ref{lem:distance lower bounds})}} + \underbrace{| \pi (y^L, y^R) |}_{= \p + \q} + \underbrace{\dist_G (y^R, u_1^R)}_{\geq 3 \ell - 2 \p \text{ (Lemma~\ref{lem:distance lower bounds})}} \geq 6 \ell - 3 \p + \q \, .
\end{equation*}
If the shortest path from $ u_i^L $ to $ u_j^R $ contains the path $ \pi (w_k^L, w_k^R) $ for some $ 1 \leq k \leq d $ entirely, then the argument is as follows:
By Lemma~\ref{lem:distance lower bounds} we have $ \dist_G (u_1^L, w_k^L) \geq 2 \ell - \p $ and $ \dist_G (w_k^R, u_1^R) \geq 2 \ell - \p $.
We therefore get
\begin{equation*}
\dist_G (u_1^L, u_1^R) = \underbrace{\dist_G (u_1^L, w_k^L)}_{\geq 2 \ell - \p \text{ (Lemma~\ref{lem:distance lower bounds})}} + \underbrace{| \pi (w_k^L, w_k^R) |}_{= \q} + \underbrace{\dist_G (w_k^R, u_1^R)}_{\geq 2 \ell - \p \text{ (Lemma~\ref{lem:distance lower bounds})}} \geq 4 \ell - 2 \p + \q \, .
\end{equation*}
\end{proof}

\begin{proposition}
If the orthogonal vectors instance $ \langle L, R \rangle $ contains an orthogonal pair, then $ D = 6 \ell - 3 \p + \q $.
\end{proposition}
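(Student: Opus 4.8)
The plan is to establish $D \le 6\ell - 3\p + \q$ and $D \ge 6\ell - 3\p + \q$ separately, paralleling the structure of the preceding proposition.

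For the upper bound it suffices, by Lemma~\ref{lem:distance upper bounds}, to bound $\dist_G(s,t)$ when $s \in P_i^L$ and $t \in P_j^R$ for some $i,j$: every other pair lies jointly in $V^L \cup V^M$ or in $V^R \cup V^M$, where $\dist_G(s,t) \le 4\ell - 2\p + \q \le 6\ell - 3\p + \q$ already. For such $s$ and $t$ I route through the centre of the construction. Lemma~\ref{lem:distance to v} gives $\dist_G(s, v_i^L) \le \ell - \p$ and $\dist_G(v_j^R, t) \le \ell - \p$; the paths $\pi(v_i^L, x^L)$ and $\pi(v_j^R, x^R)$ give $\dist_G(v_i^L, x^L) \le \ell$ and $\dist_G(x^R, v_j^R) \le \ell$; and Lemma~\ref{lem:distance upper middle vertices} gives $\dist_G(x^L, x^R) \le 2\ell - \p + \q$ since $x^R \in V^M$. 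Chaining along $s \to v_i^L \to x^L \to x^R \to v_j^R \to t$ via the triangle inequality yields $\dist_G(s,t) \le (\ell - \p) + \ell + (2\ell - \p + \q) + \ell + (\ell - \p) = 6\ell - 3\p + \q$.

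For the lower bound, let $l_i \in L$ and $r_j \in R$ be an orthogonal pair; I claim $\dist_G(u_i^L, u_j^R) \ge 6\ell - 3\p + \q$. As before, every $u_i^L$--$u_j^R$ path traverses one of the separating paths $\pi(y^L, y^R)$ or $\pi(w_k^L, w_k^R)$ entirely. If it is $\pi(y^L, y^R)$, then Lemma~\ref{lem:distance lower bounds} gives $\dist_G(u_i^L, u_j^R) \ge (3\ell - 2\p) + (\p + \q) + (3\ell - 2\p) = 6\ell - 3\p + \q$. If it is $\pi(w_k^L, w_k^R)$, then $\dist_G(u_i^L, u_j^R) \ge \dist_G(u_i^L, w_k^L) + \q + \dist_G(w_k^R, u_j^R)$, so it suffices to show $\dist_G(u_i^L, w_k^L) + \dist_G(w_k^R, u_j^R) \ge 6\ell - 3\p$. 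This is where orthogonality is used: for this particular index $k$ we have $l_i[k] = 0$ or $r_j[k] = 0$. If $l_i[k] = 0$, I will show $\dist_G(u_i^L, w_k^L) \ge 4\ell - 2\p$, while $\dist_G(w_k^R, u_j^R) \ge 2\ell - \p$ holds unconditionally by Lemma~\ref{lem:distance lower bounds}; if $r_j[k] = 0$ the roles are swapped by left--right symmetry. Either way the sum is at least $6\ell - 3\p$, which completes the lower bound and hence the proposition.

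The remaining ingredient, which I expect to be the only delicate point, is the forced-detour estimate: if $l_i[k] = 0$ then $\dist_G(u_i^L, w_k^L) \ge 4\ell - 2\p$ (the dual $\dist_G(w_k^R, u_j^R) \ge 4\ell - 2\p$ when $r_j[k] = 0$ is identical by symmetry). Since $u_i^L$ lies on the single path $\pi(u_i^L, v_i^L)$ and on no other path, every shortest path out of $u_i^L$ passes through $v_i^L$, so $\dist_G(u_i^L, w_k^L) = (\ell - \p) + \dist_G(v_i^L, w_k^L)$ (with $u_i^L = v_i^L$ and $\ell - \p = 0$ in the degenerate case $\ell = \p = 1$), and it remains to prove $\dist_G(v_i^L, w_k^L) \ge 3\ell - \p$. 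Because $l_i[k] = 0$, the path $\pi(v_i^L, w_k^L)$ is absent, so the paths incident to $v_i^L$ lead only to the dead end $u_i^L$, to $x^L$, or to vertices $w_{k'}^L$ with $l_i[k'] = 1$; a shortest $v_i^L$--$w_k^L$ path therefore first reaches $x^L$ or such a $w_{k'}^L$, and from each of these $w_k^L$ is at distance at least $2\ell - \p$ (through $y^L$), giving $\dist_G(v_i^L, w_k^L) \ge \ell + (2\ell - \p) = 3\ell - \p$; the only alternative, entering $w_k^L$ from the $w_k^R$ side, costs at least $3\ell + 2\q \ge 3\ell - \p$. Everything beyond this case analysis — and verifying the elementary distances it rests on, such as $\dist_G(x^L, w_k^L) = 2\ell - \p$ and $\dist_G(w_{k'}^L, w_k^L) = 2\ell$ — is routine triangle-inequality bookkeeping on top of the stated lemmas.
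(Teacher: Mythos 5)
Your proof is correct and follows essentially the same route as the paper: the same reduction of the upper bound to pairs $s \in P_i^L$, $t \in P_j^R$ via Lemma~\ref{lem:distance upper bounds} and the same seven-edge route through $x^L, y^L, y^R, x^R$ (you just package the middle legs via Lemma~\ref{lem:distance upper middle vertices}), and the same lower-bound decomposition by which separator path $\pi(y^L,y^R)$ or $\pi(w_k^L,w_k^R)$ the shortest $u_i^L$--$u_j^R$ path must cross entirely. The only real difference is presentational: in the $\pi(w_k^L,w_k^R)$ case, the paper uses $r_j[k]=0$ and splits on whether the path continues from $w_k^R$ via $y^R$ or via some $v_{j'}^R$ (invoking Lemma~\ref{lem:distance lower bounds} on the right), whereas you prove the mirror-image ``forced detour'' bound $\dist_G(u_i^L, w_k^L) \ge 4\ell - 2\p$ when $l_i[k]=0$ from scratch by a case analysis on the paths leaving $v_i^L$ -- the content is the same up to left--right symmetry, and your sketch of that detour bound checks out.
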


\begin{proof}
We first show that $ D \geq 6 \ell - 3 \p + \q $.
Let $ l_i \in L $ and $ r_j \in R $ denote the orthogonal pair.
We will argue that $ \dist_G (u_i^L, u_j^R) \geq 6 \ell - 3 \p + \q $.

Since the paths $ \pi (y^L, y^R), \pi (w_1^L, w_1^R), \ldots, \pi (w_d^L, w_d^R) $ separate the left part of the graph from the right part of the graph, every path from $ u_i^L $ to $ u_j^R $ must contain at least one of these paths entirely.
If the shortest path from $ u_i^L $ to $ u_j^R $ contains the path $ \pi (y^L, y^R) $ entirely, then, since $ \dist_G (u_i^L, y^L) = 3 \ell - 2 \p $ and $ \dist_G (u_j^R, y^R) = 3 \ell - 2 \p $ by Lemma~\ref{lem:distance lower bounds},
\begin{equation*}
\dist_G (u_i^L, u_j^R) = \underbrace{\dist_G (u_i^L, y^L)}_{\geq 3 \ell - 2 \p \text{ (Lemma~\ref{lem:distance lower bounds})}} + \underbrace{| \pi (y^L, y^R) |}_{= \p + \q} + \underbrace{\dist_G (y^R, u_j^R)}_{\geq 3 \ell - 2 \p \text{ (Lemma~\ref{lem:distance lower bounds})}} \geq 6 \ell - 3 \p + \q \, .
\end{equation*}
If the shortest path from $ u_i^L $ to $ u_j^R $ contains the path $ \pi (w_k^L, w_k^R) $ for some $ 1 \leq k \leq d $ entirely, then the argument is as follows.
Since $ l $~and~$ r $ are an orthogonal pair, we have $ r_j [k] = 0 $ or $ l_i [k] = 0 $.
By symmetry, assume $ r_j [k] = 0 $, which implies that the path $ \pi (w_k^R, v_j^R) $ is not contained in~$ G $.
Since the shortest path is simple, it is either the case that after the vertex $ w_k $ the shortest path contains (a) the subpath $ \pi (w_k^R, y^R) $ or (b) the subpath $ \pi (w_k^R, v_{j'}^R) $ for some $ 1 \leq j' \leq n $ such that $ j' \neq j $.
By Lemma~\ref{lem:distance lower bounds} we have $ \dist_G (u_j^R, w_k^L) \geq 2 \ell - \p $.
In case (a) we additionally use $ \dist_G (u_j^R, y^R) = 3 \ell - 2 \p $ from Lemma~\ref{lem:distance lower bounds} and thus get
\begin{align*}
\dist_G (u_i^L, u_j^R) & = \underbrace{\dist_G (u_i^L, w_k^L)}_{\geq 2 \ell - \p \text{ (Lemma~\ref{lem:distance lower bounds})}} + \underbrace{| \pi (w_k^L, w_k^R) |}_{= \q} + \underbrace{| \pi (w_k^R, y^R) |}_{= \ell} + \underbrace{\dist_G (y^R, u_j^R)}_{\geq 3 \ell - 2 \p \text{ (Lemma~\ref{lem:distance lower bounds})}} \\
 & \geq 6 \ell - 2 \p + \q \geq 6 \ell - 3 \p + \q \, .
\end{align*}
In case (b) we additionally use $ \dist_G (u_j^R, v_{j'}^R) = 3 \ell - \p $ from Lemma~\ref{lem:distance lower bounds} and thus get
\begin{align*}
\dist_G (u_i^L, u_j^R) & = \underbrace{\dist_G (u_i^L, w_k^L)}_{\geq 2 \ell - \p \text{ (Lemma~\ref{lem:distance lower bounds})}} + \underbrace{| \pi (w_k^L, w_k^R) |}_{= \q} + \underbrace{| \pi (w_k^R, v_{j'}) |}_{= \ell} + \underbrace{\dist_G (v_{j'}, u_j^R)}_{\geq 3 \ell - \p \text{ (Lemma~\ref{lem:distance lower bounds})}} \\
 & \geq 6 \ell - \p + \q \geq 6 \ell - 3 \p + \q \, .
\end{align*}

It remains to show that $ D \leq 6 \ell - 3 \p + \q $.
By Lemma~\ref{lem:distance upper bounds} and since $ 4 \ell - 2 \p + \q \leq 6 \ell - 3 \p + \q $, we only have to show that $ \dist_G (s, t) \leq 6 \ell - 3 \p + \q $ when $ s \in P_i^L $ for some $ 1 \leq i \leq n $ and $ t \in P_j^R $ for some $ 1 \leq j \leq n $.
By Lemma~\ref{lem:distance to v} we have $ \dist_G (s, v_i^L) \leq \ell - \p $ and $ \dist_G (v_j^R, t) \leq \ell - \p $ for such $ s \in P_i^L $ and $ t \in P_j^R $.
By the triangle inequality we therefore have
\begin{align*}
\dist_G (s, t) \leq &\underbrace{\dist_G (s, v_i^L)}_{\leq \ell - \p \text{ (Lemma~\ref{lem:distance to v})}} + \underbrace{\dist_G (v_i^L, x^L)}_{\leq \ell} + \underbrace{\dist_G (x^L, y^L)}_{\leq \ell - \p} + \underbrace{\dist_G (y^L, y^R)}_{\leq \p + \q} \\
 &+ \underbrace{\dist_G (y^R, x^R)}_{\leq \ell - \p} + \underbrace{\dist_G (x^R, v_j^R)}_{\leq \ell} + \underbrace{\dist_G (v_j^R, t)}_{\leq \ell - \p \text{ (Lemma~\ref{lem:distance to v})}} \leq 6 \ell - 3 \p + \q \, .
\end{align*}
\end{proof}

\printbibliography[heading=bibintoc] % Make bibliography show up in table of contents

\appendix
\section*{Appendix}
In this appendix, we provide rigorous proofs of Lemmas~\ref{lem:distance to v} to~\ref{lem:distance lower bounds}.

\subsection*{Proof of Lemma~\ref{lem:distance to v}}

We only proof the first part of the claim.
The second part then follows from symmetric arguments.
There are three possibilities for a vertex $ s $ to be contained in~$ P_i^L $:
\begin{enumerate}
\item $ s $ lies on the path $ \pi (u_i^L, v_i^L) $ (which has length $ \ell - \p $)
\item $ s $ lies on the path $ \pi (v_i^L, x^L) $ (which has length $ \ell $) and $ s \neq x^L $
\item $ s $ lies on the path $ \pi (v_i^L, w_k^L) $ (which has length $ \ell $) for some $ 1 \leq k \leq d $ such that $ l_i [k] = 1 $ and $ s \neq w_k^L $
\end{enumerate}
As $ p \leq 1 $, we have $ \ell - 1 \leq \ell - \p $ and thus in each of the three cases we have $ \dist_G (s, v_i^L) \leq \ell - \p $.

\subsection*{Proof of Lemma~\ref{lem:distance upper middle vertices}}

Consider all simple paths from $ x^L $ to $ x^R $ in $G[V^M]$:
These are $ \pi (x^L, y^L), \pi(y^L, y^R), \pi(y^R, x^R) $ as well as $ \pi(x^L, y^L), \pi(y^L, w_k^L), \pi(w_k^L, w_k^R), \pi(w_k^R, y^R), \pi(y^R, x^R) $ for every $ 1 \leq k \leq d $.
These paths have length $ 2 (\ell - \p) + \p + \q \leq 4 \ell - 2 \p + \q $ and $ 2 (\ell - \p) + 2 \ell + \q =
4 \ell - 2 \p + \q $, respectively, in both cases we obtain length $ \leq 4 \ell - 2 \p + \q $.
Moreover, each node in $ V^M $ is contained in (at least) one of these paths.
From these paths, pick $ P_s $, $ P_t $ containing $ s, t $.
Following $ P_s $ and then following the reversed $ P_t $ yields a cyclic walk from $ x^L $ to itself containing $ s $ and $ t $.
This walk has length at most $ 2 (4 \ell - 2 \p + \q) $, and thus the induced walk from $ s $ to $ t $ or the one from $ t $ to $ s $ has length at most $ 4 \ell - 2 \p + \q $. This yields $\dist_G(s,t) \le 4 \ell - 2 \p + \q $.

For the stronger bound $ \dist_G (x^L, v) \leq 2 \ell - \p + \q $ for every vertex $ v \in V^M $, consider the following paths:
\begin{itemize}
\item $ \pi(x^L, y^L), \pi(y^L, w_k^L), \pi(w_k^L, w_k^R) $ for any $ 1 \leq k \leq d $ has length $ 2 \ell - \p + \q $
\item $ \pi(x^L, y^L), \pi(y^L, y^R), \pi(y^R ,w_k^R) $ minus the last vertex, for any $ 1 \leq k \leq d $, has length $ (\ell - \p) + (\p + \q) + \ell - 1 \leq (\ell - \p) + (\p + \q) + (\ell - \p) = 2 \ell - \p + \q $.
\item $ \pi(x^L, y^L), \pi(y^L, y^R), \pi(y^R, x^R) $ has length $2 (\ell - \p) + (\p + \q) = 2 \ell - \p + \q $
\end{itemize}
Since these paths cover all vertices in $ V^M $, each vertex in $ V^M $ has distance at most $ 2 \ell - \p + \q $ to~$ x^L $.
An analogous argument gives $ \dist_G (v, x^R) \leq 2 \ell - \p + \q $ for every vertex $ v \in V^M $.

\subsection*{Proof of Lemma~\ref{lem:distance upper bounds}}

By Lemma~\ref{lem:distance upper middle vertices} we have $ \dist_G (s, t) \leq 4 \ell - 2 \p + \q $ for $ s, t \in V^M $.
Next, consider the case $ s \in V^L $ and $ t \in V^L $, say $ s \in P_i^L $ for some $ 1 \leq i \leq n $ and $ t \in P_j^L $ for some $ 1 \leq j \leq n $.
Then we have
\begin{align*}
\dist_G (s, t) & \leq \underbrace{\dist_G (s, v_i^L)}_{\leq \ell - \p \text{ by Lemma~\ref{lem:distance to v}}} + \underbrace{\dist_G (v_i^L, x^L)}_{\leq | \pi (x^L, v_i^L) | \leq \ell} + \underbrace{\dist_G (x^L, v_j^L)}_{\leq | \pi (x^L, v_i^L) | \leq \ell} + \underbrace{\dist_G (v_j^L, t)}_{\leq \ell - \p \text{ by Lemma~\ref{lem:distance to v}}} \\
	& \leq 4 \ell - 2 \p \, .
\end{align*}
Finally, consider the case $ s \in V^L $ and $ t \in V^M $, say $ s \in P_i^L $ for some $ 1 \leq i \leq n $.
By Lemma~\ref{lem:distance upper middle vertices} we have $ \dist_G (x^L, t) \leq 2 \ell - \p + \q $ and thus get
\begin{equation*}
\dist_G (s, t) \leq \underbrace{\dist_G (s, v_i^L)}_{\leq \ell - \p \text{ by Lemma~\ref{lem:distance to v}}} + \underbrace{\dist_G (v_i^L, x^L)}_{\leq | \pi (x^L, v_i^L) | \leq \ell} + \underbrace{\dist_G (x^L, t)}_{\leq 2 \ell - \p + \q \text{ by Lemma~\ref{lem:distance upper middle vertices}}} \leq 4 \ell - 2 \p + \q \, .
\end{equation*}
The remaining cases require symmetric arguments in which the roles of $ L $ and $ R $ are exchanged.

\subsection*{Proof of Lemma~\ref{lem:distance lower bounds}}

Let $ 1 \leq i \leq n $.
Observe that all simple paths of length at most $ 3 \ell - \p $ starting at vertex $ u_i^L $ and ending at an exterior vertex must be of the following form (where some of these paths are of length at most $ 3 \ell - \p $ only if $ \p = 0 $, and some only if $ \p = 0 $ and $ \q = 0 $):
\begin{enumerate}
\item $ \pi (u_i^L, v_i^L) $ (of length $ \ell - \p $),
\item $ \pi (u_i^L, v_i^L), \pi (v_i^L, x^L) $ (of length $ 2 \ell - \p $), \label{item:contains xL 1}
\item $ \pi (u_i^L, v_i^L), \pi (v_i^L, x^L), \pi (x^L, y^L) $ (of length $ 3 \ell - 2 \p $),
\item $ \pi (u_i^L, v_i^L), \pi (v_i^L, x^L), \pi (x^L, y^L), \pi (y^L, y^R) $ (of length $ 3 \ell - \p + \q $),
\item $ \pi (u_i^L, v_i^L), \pi (v_i^L, x^L), \pi (x^L, v_{i'}^L) $ for some $ 1 \leq i' \leq n $ (of length $ 3 \ell - \p $),
\item $ \pi (u_i^L, v_i^L), \pi (v_i^L, w_k^L) $ for some $ 1 \leq k \leq d $ such that $ l_i [k] = 1 $ (of length $ 2 \ell - \p $),
\item $ \pi (u_i^L, v_i^L), \pi (v_i^L, w_k^L), \pi (w_k^L, y^L) $ for some $ 1 \leq k \leq d $ such that $ l_i [k] = 1 $ (of length $ 3 \ell - \p $),
\item $ \pi (u_i^L, v_i^L), \pi (v_i^L, w_k^L), \pi (w_k^L, y^L), \pi (y^L, y^R) $ for some $ 1 \leq k \leq d $ such that $ l_i [k] = 1 $ (of length $ 3 \ell + \q $),
\item $ \pi (u_i^L, v_i^L), \pi (v_i^L, w_k^L), \pi (w_k^L, v_{i'}^L) $ for some $ 1 \leq k \leq d $ and some $ 1 \leq i' \leq n $ with $ i' \neq i $ such that $ l_i [k] = 1 $ and $ l_{i'} [k] = 1 $ (of length $ 3 \ell - \p $),
\item $ \pi (u_i^L, v_i^L), \pi (v_i^L, w_k^L), \pi (w_k^L, w_k^R) $ for some $ 1 \leq k \leq d $ (of length $ 2 \ell - \p + \q $),
\item $ \pi (u_i^L, v_i^L), \pi (v_i^L, w_k^L), \pi (w_k^L, w_k^R), \pi (w_k^R, y^R) $ for some $ 1 \leq k \leq d $ (of length $ 3 \ell - \p + \q $),
\item $ \pi (u_i^L, v_i^L), \pi (v_i^L, w_k^L), \pi (w_k^L, w_k^R), \pi (w_k^R, y^R), \pi (y^R, y^L) $ for some $ 1 \leq k \leq d $ (of length $ 3 \ell + 2 \q $), or
\item $ \pi (u_i^L, v_i^L), \pi (v_i^L, w_k^L), \pi (w_k^L, w_k^R), \pi (w_k^R, v_{j'}^R) $ for some $ 1 \leq k \leq d $ and some $ 1 \leq j' \leq n $ such that $ l_i [k] = 1 $ and $ r_{j'} [k] = 1 $ (of length $ 3 \ell - \p + \q $).
\end{enumerate}
It follows that $ \dist_G (u_i^L, v_{i'}^L) = 3 \ell - \p $ (as all paths of length at most $ 3 \ell - 2 \p $ ending at $ v_{i'}^L $ have length $ 3 \ell - \p $), $ \dist_G (u_i^L, y^L)= 3 \ell - 2 \p $ (as the shortest path of length at most $ 3 \ell - 2 \p $ ending at $ y^L $ has length $ 3 \ell - 2 \p $), and $ \dist_G (u_i^L, w_k^L) \geq 2 \ell - \p $ for every $ 1 \leq k \leq d $ (as the only possible path of length at most $ 3 \ell - 2 \p $ ending at $ w_k^L $ has length $ 2 \ell - \p $).
A symmetric argument gives $ \dist_G (u_i^R, v_{i'}^R) = 3 \ell - \p $, $ \dist_G (u_i^R, y^R) = 3 \ell - 2 \p $, and $ \dist_G (u_i^R, w_k^R) \geq 2 \ell - \p $ for every $ 1 \leq k \leq d $.

\end{document}